\newtheorem{theorem}{Theorem}
\newtheorem{remark}{Remark}
\newtheorem{lemma}{Lemma}
\newtheorem{definition}{Definition}
 \newenvironment{proofof}[1]{\vspace*{5mm} \par \noindent
         \quad{\it Proof of #1: \hspace{2mm}}}{\endproof
\hfill$\Box$ \vspace*{3mm}
}
\def\QED{\mbox{\rule[0pt]{1.5ex}{1.5ex}}}
\def\endproof{\hspace*{\fill}~\QED\par\endtrivlist\unskip}
\renewcommand{\qed}{\hfill \QED}
\def\FF{\mathbb{F}}
\def\im{\mathop{\rm Im}}
\def\Label#1{\label{#1}\ [\ \text{#1}\ ]\ }
\def\Label{\label}
\begin{document}

\title{
Reduction Theorem for Secrecy 
over Linear Network Code for Active Attacks
}

\author{Masahito Hayashi, \IEEEmembership{Fellow, IEEE}, 
Masaki Owari, Go Kato, and Ning Cai \IEEEmembership{Fellow, IEEE}
\thanks{This works was supported in part by
the Japan Society of the Promotion of Science (JSPS) Grant-in-Aid for Scientific Research (B) Grant 16KT0017 and
for Scientific Research (A) Grant 17H01280 and
for Scientific Research (C) Grant 16K00014 and No. 17K05591, 
in part by the Okawa Research Grant, and 
in part by the Kayamori Foundation of Informational Science Advancement.
The material in this paper was presented in part at the 2017 IEEE International Symposium on Information Theory (ISIT 2017),   Aachen (Germany), 25-30 June 2017 \cite{HOKC}.}
\thanks{Masahito Hayashi is with the Graduate School of Mathematics, Nagoya University, Nagoya, 464-8602, Japan. 
He is also with 
Shenzhen Institute for Quantum Science and Engineering, Southern University of Science and Technology,
Shenzhen, 518055, China,
Center for Quantum Computing, Peng Cheng Laboratory, Shenzhen 518000, China,
and the Centre for Quantum Technologies, National University of Singapore, 3 Science Drive 2, 117542, Singapore
(e-mail:masahito@math.nagoya-u.ac.jp).
Masaki Owari is with Department of Computer Science, Faculty of Informatics, Shizuoka University, Japan 
(e-mail:masakiowari@inf.shizuoka.ac.jp).
Go Kato is with NTT Communication Science Laboratories, NTT Corporation, Japan
(e-mail:kato.go@lab.ntt.co.jp).
Ning Cai is with the School of Information Science and Technology, ShanghaiTech University, Middle Huaxia Road no 393,
Pudong, Shanghai  201210, China
(e-mail: ningcai@shanghaitech.edu.cn).} }

\markboth{M. Hayashi, M. Owari, G. Kato, and N. Cai: Reduction Theorem for Secrecy 
over Linear Network Code for Active Attacks}{}

\maketitle

\begin{abstract}
We discuss the effect of sequential error injection on information leakage
under a network code.
We formulate a network code for the single transmission setting and
the multiple transmission setting.
Under this formulation, we show that 
the eavesdropper cannot improve the power of eavesdropping
by sequential error injection
when the operations in the network are linear operations.
We demonstrate the usefulness of this reduction theorem
by applying a concrete example of network.
\end{abstract}

\begin{IEEEkeywords} 
secrecy analysis,
secure network coding,
sequential injection,
passive attack,
active attack
\end{IEEEkeywords}

\section{Introduction}

Secure network coding offers a method for securely transmitting information from 
an authorized sender to an authorized receiver.
Cai and Yeung \cite{Cai2002} discussed the secrecy 
when the malicious adversary, Eve, wiretaps  a subset $E_E$ of the set $E$ of all the channels in a network.
Using the universal hashing lemma \cite{bennett95privacy,HILL,hayashi11}, 
the papers \cite{Matsumoto2011,Matsumoto2011a} showed the existence of a secrecy code that works universally for any type of eavesdropper 
when the cardinality of $E_E$ is bounded.
In addition, the paper \cite{KMU} discussed the construction of such a code.
As another type of attack on information transmission via a network,
a malicious adversary contaminates the communication by 
changing the information on a subset $E_A$ of $E$.
Using an error correction, the papers \cite{Cai06a,Cai06,HLKMEK,JLHE} proposed a method to protect the message from contamination.
That is, we require that the authorized receiver correctly recovers the message, which is called robustness.

\if0
When the transmission rate from the authorized sender, Alice, to the authorized receiver, Bob, is $m_{0}$
and the rate of noise injected by Eve is $m_{1}$,
using the results published in \cite{JLKHKM,Jaggi2008}
the study \cite{JL} showed that
there exists a sequence of asymptotically correctable codes with the rate
$m_{0}-m_{1}$ if the rate of information leakage to Eve is less than $m_{0}-m_{1}$.
\fi

As another possibility, we consider the case when the malicious adversary combines eavesdropping and contamination.
That is, contaminating a part of the channels, the malicious adversary might improve the ability of eavesdropping
while a parallel network offers no such a possibility \cite{KSZBJ,ZKBJS1,ZKBJS2}.
In fact, in arbitrarily varying channel model,
noise injection is allowed after Eve's eavesdropping, but  
Eve does not eavesdrop the channel after Eve's noise injection \cite{BBT,Ahlswede,CN88,TJBK}\cite[Table I]{KJL}.
The paper \cite{Yao2014} also discusses secrecy in the same setting
while it addresses the network model.
The studies \cite{KMU,Zhang} discussed the secrecy
when Eve eavesdrops the information transmitted on the channels in $E_E$ after noises are injected in $E_A$,
but they assume that Eve does not know the information of the injected noise. 
The paper \cite{Yao2014} discusses secrecy only for a passive attack. 

In contrast, this paper focuses on network, 
and discusses the secrecy when Eve 
adds artificial information to the information transmitted on the channels in $E_A$, 
eavesdrops the information transmitted on the channels in $E_E$, and 
estimates the original message from the eavesdropped information and the information of the injected noises.
We call this type of attack an {\it active attack} 
and call an attack without contamination a {\it passive attack}.
Specially, we call each of Eve's active operations a strategy.
When $E_A \subset E_E$ and any active attack is available for Eve, she is allowed to arbitrarily modify the information on the channels in $E_A$ sequentially based on the obtained information.

This paper aims to show a reduction theorem for an active attack, i.e.,
the fact that no strategy can improve Eve's information 
when every operation in the network is linear 
and Eve's contamination satisfies a natural causal condition.
When the network is not well synchronized, Eve can make an attack across several channels.
This reduction theorem holds even under this kind of attack.
In fact, there is an example having a non-linear node operation such that 
Eve can improve her performance to extract information from eavesdropping an edge outgoing an intermediate node
by adding artificial information to an edge incoming the intermediate node
\cite{HOKC2}.
This example shows the necessity of linearity for this reduction theorem.
Although our discussion can be extended to 
the multicast and multiple-unicast cases,
for simplicity, we consider the unicast setting in the following discussion.

Further, we apply our general result to the analysis of a concrete example of a network.
In this network, we demonstrate that  any active attack cannot improve the performance of eavesdropping.
However, in the single transmission case over the finite field $\FF_2$,
the error correction and the error detection is impossible over this contamination.
To resolve this problem, 
this paper addresses the multiple transmission case in addition to the single transmission case.
In the multiple transmission case, the sender uses the same network multiple times,
and the topology and dynamics of the network do not change during these transmissions.
While several papers discussed this model, many of them discussed 
the multiple transmission case only with contamination \cite{JLKHKM,Jaggi2008,JL} or eavesdropping \cite{Matsumoto2011,Matsumoto2011a}.
Only the paper \cite{Yao2014} addressed it with contamination and eavesdropping, i.e., it assumed that all contaminations are done after eavesdropping.
We formulate the multiple transmission case when each transmission has no correlation with the previous transmission
while injected noise might have such a correlation.
Then, we show the above type of reduction theorem for an active attack even under
the multiple transmission case.
We apply this result to the multiple transmission over the above example of a network,
in which, the error correction and the error detection are possible over this contamination.
Hence, the secrecy and the correctness hold in this case.

The remaining part of this paper is organized as follows.
Section \ref{S2-1} discusses only the single transmission setting that has only a single transmission
and 
Section \ref{S2-5} does the multiple transmission setting that has $n$ transmissions.
Two types of multiple transmission settings are formulated.
Then, we state our reduction theorem in both settings.
In Section \ref{SCon}, we state the conclusion.


\section{Single transmission setting}\Label{S2-1}
\subsection{Generic model}
In this subsection, we give a generic model, and discuss its relation with a concrete network model in the latter subsections.
We consider the unicast setting of network coding on a network.
Assume that the authorized sender, Alice, intends to send information to the authorized receiver, Bob,
via the network.
Although the network is composed of $m_1$ edges and $m_2$ vertecies,
as shown in later, the model can be simplified as follows when the node operations are linear. 
We assume that
Alice inputs the input variable ${\bf X}$ in $\FF_q^{m_{{3}}}$ and Bob receives the output variable ${\bf Y}_B$ in $\FF_q^{m_{{4}}}$,
where $\FF_q$ is a finite field whose order is a power $q$ of the prime $p$.
We also assume that the malicious adversary, Eve, wiretaps the information ${\bf Y}_E$ in $\FF_q^{m_{6}}$\footnote{In this paper, we denote the vector on $\FF_q$ by a bold letter.
But, we use a non-bold letter to describe a scalar and a matrix.}. 
Then, we adopt the model with matrices $K_B \in \FF_q^{m_{{4}}\times m_{{3}} }$ and $K_E \in \FF_q^{m_{6}\times m_{{3}} }$, in which,the variables ${\bf X}$, ${\bf Y}_B$, and ${\bf Y}_E$ satisfy their relations
\begin{align}
{\bf Y}_B=K_B {\bf X}, \quad
{\bf Y}_E=K_E {\bf X}.\Label{F2}
\end{align}
This attack is a conventional wiretap model and is called 
a {\it passive attack} to distinguish an active attack, which will be introduced later.
Section \ref{PDG} will explain how
this model is derived from a directed graph with $E_E$
and linear operations on nodes.

In this paper, we address a stronger attack, in which, Eve injects noise ${\bf Z} \in \FF_q^{m_{5}}$.
Hence, using matrices $H_B \in \FF_q^{m_{{4}}\times m_{5} }$ and $H_E \in \FF_q^{m_{6}\times m_{5}}$, 
we rewrite the relations \eqref{F2} as
  \begin{align}
{\bf Y}_B=K_B {\bf X}+ H_B {\bf Z}, \quad
{\bf Y}_E=K_E {\bf X}+ H_E {\bf Z}, 
\Label{E3}
\end{align}
which is called a {\it wiretap and addition model}.
The $i$-th injected noise $Z_i$ (the $i$-th component of ${\bf Z}$) is decided by a function 
$\alpha_i$ of ${\bf Y}_E$.
Although a part of ${\bf Y}_E$ is a function of $\alpha_i$,
this point does not make a problem for causality, as explained in Section \ref{PDG2-5}.
In this paper, when a vector has the $j$-th component $x_j$,
the vector is written as $[x_j]_{1\le j \le a}$, where
the subscript $1\le j \le a$ expresses the range of the index $j$.
Thus, the set $\alpha=[\alpha_i]_{1\le i\le m_5}$ of the functions can be regarded as Eve's strategy,
and we call this attack an {\it active attack with a strategy} $\alpha$.
That is, an active attack is identified by a pair of a strategy $\alpha$ 
and a wiretap and addition model decided by $\bm{K}, \bm{H}$.
Here, we treat $K_B,K_E,H_B$, and $H_E$ as deterministic values, and denote the pairs $(K_B,K_E)$ and 
$(H_B,H_E)$ by $\bm{K}$ and $\bm{H}$, respectively.
Hence, our model is written as the triplet $(\bm{K}, \bm{H}, \alpha )$.
As shown in the latter subsections, 
under the linearity assumption on the node operations,
the triplet $(\bm{K}, \bm{H}, \alpha )$ is decided from the network topology (a directed graph with $E_A$ and $E_E$)
and dynamics of the network.
Here, we should remark that 
the relation \eqref{E3} is based on the linearity assumption for node operations. 
Since this assumption is the restriction for the protocol, it does not restrict the eavesdropper's strategy.


\begin{table}[htpb]
  \caption{Channel parameters}
\Label{hikaku}
\begin{center}
  \begin{tabular}{|c|l|} 
\hline
$m_1$ & Number of edges\\
\hline
$m_2$ & Number of vertecies\\
\hline
$m_3$ & Dimension of Alice's input information ${\bf X}$ \\
\hline
$m_4$ & Dimension of Bob's observed information ${\bf Y}_B$ \\
\hline
$m_5$ & Dimension of Eve's injected information ${\bf Z}$\\
\hline
$m_6$ & Dimension of Eve's wiretapped information ${\bf Y}_E$\\
\hline
$m_7$ & $m_1-m_3$ \\
\hline
  \end{tabular}
\end{center}
\end{table}

We impose several types for regularity conditions for Eve's strategy $\alpha$, which are demanded from causality.
Notice that $\alpha_i $ is a function of the vector $[Y_{E,j}]_{1\le j \le m_6}$.
Now, we take the causality with respect to $\alpha$ into account.
Here, we assume that the assigned index $i$ for $1\le i \le m_5$ expresses the time-ordering of injection.
That is, we assign the index $i$ for $1\le i \le m_5$ according to the order of injections.
Hence, we assume that $\alpha_i $ is decided by a part of Eve's observed variables. 
We say that subsets $w_i\subset \{1, \ldots, m_6\}$ for $i \in \{1, \ldots, m_5\}$ 
are the {\it domain index subsets for} $\alpha$
when the function $\alpha_i $ is given as a function of the vector $[Y_{E,j}]_{j \in w_i}$.
Here, the notation $j \in w_i$ means that
the $j$-th eavesdropping is done before the $i$-th injection, i.e.,
$w_i$ expresses the set of indexes corresponding to the symbols 
that do effect the $i$-th injection.
Hence, the eavesdropped symbol $Y_{E,j}$ does not depend on the injected symbol $z_i$ for $j \in w_i$.
Since the decision of the
injected noise does not depend on the consequences of the decision, 
we introduce the following causal condition.

\begin{definition}\Label{Def3}
We say that 
the domain index subsets $\{w_i\}_{1, \ldots, m_5}$ satisfy the {\it causal condition}
when the following two conditions hold;
\begin{description}
\item[(A1)] The relation $ H_{E;j,i} = 0$ holds for $j \in w_i$.
\item[(A2)] 
The relation $ w_{1} \subseteq w_{2} \subseteq \ldots \subseteq 
w_{m_5}$ holds.
\end{description}
\end{definition}


As a necessary condition of the causal condition,
we introduce the following uniqueness condition for the function $\alpha_i $, which is given as a function of the vector $[Y_{E,j}]_{1\le j \le m_6}$.

\begin{definition}
For any value of $ {\bf x}$,
there uniquely exists ${\bf y} \in \FF_q^{m_6} $ such that
\begin{align}
{\bf y}= K_E {\bf x}+ H_E \alpha({\bf y}).\Label{Uni}
\end{align}
This condition is called the {\it uniqueness condition for} $\alpha$.
\end{definition}

Examples of a network with $w_i$, $[H_{E;j,i}]_{i,j}$
will be given in Subsection \ref{F1Ex}.
Then, we have the following lemma.

\begin{lemma}\Label{LL2}
When 
a strategy $\alpha$ has domain index subsets to satisfy the causal condition, 
the strategy $\alpha$ satisfies the uniqueness condition.
\end{lemma}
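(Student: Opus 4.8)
The plan is to establish the uniqueness condition by induction along the injection order, with conditions (A1) and (A2) used in tandem. The pivotal observation is that if $j\in w_i$ then (A2) yields $j\in w_{i'}$ for every $i'\ge i$, and (A1) then forces $H_{E;j,i'}=0$ for all such $i'$. Hence, for any ${\bf y}$ obeying \eqref{Uni}, the $j$-th coordinate satisfies
\begin{align}
Y_{E,j}=(K_E{\bf x})_j+\sum_{i'=1}^{i-1}H_{E;j,i'}\,\alpha_{i'}({\bf y}) \qquad (j\in w_i),
\end{align}
so $Y_{E,j}$ depends only on ${\bf x}$ and on the earlier injected symbols $\alpha_1({\bf y}),\dots,\alpha_{i-1}({\bf y})$.

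With the convention $w_0:=\emptyset$, I would prove by induction on $i=0,1,\dots,m_5$ the assertion: for every ${\bf y}$ satisfying \eqref{Uni}, the subvector $[Y_{E,j}]_{j\in w_i}$ is uniquely determined by ${\bf x}$, and therefore so are $\alpha_1({\bf y}),\dots,\alpha_i({\bf y})$. The base case $i=0$ is vacuous. For the step from $i$ to $i+1$, the displayed identity (with $i$ replaced by $i+1$) writes each $Y_{E,j}$ with $j\in w_{i+1}$ in terms of ${\bf x}$ and $\alpha_1({\bf y}),\dots,\alpha_i({\bf y})$, which are already pinned down by the inductive hypothesis; thus $[Y_{E,j}]_{j\in w_{i+1}}$ is determined, and since $\alpha_{i+1}$ is by definition a function of exactly these coordinates, $\alpha_{i+1}({\bf y})$ is determined as well. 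Taking $i=m_5$ shows $\alpha({\bf y})$ is uniquely fixed by ${\bf x}$, whence \eqref{Uni} gives a unique ${\bf y}=K_E{\bf x}+H_E\alpha({\bf y})$; this is the uniqueness half.

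For existence, I would run the same recursion constructively: define $\alpha_1,\dots,\alpha_{m_5}$ one at a time from ${\bf x}$ via the displayed formula, put ${\bf y}:=K_E{\bf x}+H_E[\alpha_i]_{1\le i\le m_5}$, and check that this ${\bf y}$ solves \eqref{Uni}. The only point to verify is that, for each $i$, the coordinates $[Y_{E,j}]_{j\in w_i}$ of this final ${\bf y}$ coincide with the values fed into $\alpha_i$ during the construction; this is immediate because $H_{E;j,i'}=0$ for $j\in w_i$ and $i'\ge i$ collapses the sum to $i'\le i-1$. There is no analytic difficulty here; the only care required is the bookkeeping that keeps (A1) and (A2) coupled, so that the restriction of the right-hand side of \eqref{Uni} to the coordinates in $w_i$ never refers to injections of index $\ge i$. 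That coupling is the one place where both hypotheses are genuinely needed, and it is what I expect to be the crux of the write-up.
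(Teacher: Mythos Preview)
Your argument is correct and follows essentially the same inductive route as the paper: both proofs show by induction on $i$ that the coordinates $[y_j]_{j\in w_i}$ are determined by $K_E{\bf x}$, using (A2) to push $j\in w_i$ up to all $i'\ge i$ and then (A1) to kill the corresponding entries of $H_E$. Your write-up is in fact a bit more careful than the paper's, since you separate existence from uniqueness and verify the constructive recursion explicitly, whereas the paper leaves existence implicit in the phrase ``$y_j$ is given as a function of $K_E{\bf x}$.''
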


\begin{proof}
When the causal condition holds,
we show the fact that $y_{j'}$ is given as a function of $K_E {\bf x}$ for any $j' \in w_{i}$
by induction with respect to the index $i =1, \ldots, m_5$, which expresses the order of the injected information.
This fact yields the uniqueness condition.

For $j \in w_1$, we have $y_j= (K_E {\bf x})_j$ because $(H_E \alpha(y))_{j}$ is zero.
Hence, the statement with $i=1$ holds.
We choose $j \in w_{i+1}\setminus w_{i}$.
Let $z_{i'}$ be the $i'$-th injected information.
Due to Conditions (A1) and (A2),
$y_j- (K_E {\bf x})_j=(H_Ez)_j$ is a function of 
$z_1=a(y)_1,\cdots,z_i=a(y)_i $.
Since the assumption of the induction guarantees that
$z_{1}, \ldots, z_{i}$ are functions of 
$[y_{j'}]_{j' \in w_{i}}$,
$z_{1}, \ldots, z_{i}$ are functions of $K_E {\bf x}$.
Then, we find that $y_j=(K_Ex)_j+(H_Ez)_j$ is given as a function of 
$K_E {\bf x}$ for any 
$j \in w_{i+1}\setminus w_{i}$.
That is, the strategy $\alpha$ satisfies the uniqueness condition.
\end{proof}

\if0
Now, we consider a condition equivalent to the uniqueness condition
when $\alpha$ is given as a linear function, i.e., $\alpha(y)=Gy$ for a matrix $G$.
Equation \eqref{Uni} is equivalent to the equation 
$(I- H_E G){\bf y}= K_E {\bf x}$.
Hence, the uniqueness is equivalent to the invertability of 
the matrix $I- H_E G$.
In fact, if the causal condition does not hold,
the matrix $I- H_E G$ is not invertible.
\fi

Now, we have the following reduction theorem.
\begin{theorem}[Reduction Theorem]\Label{T0}
When the strategy $\alpha$ satisfies the uniqueness condition, 
Eve's information ${\bf Y}_E(\alpha)$ with strategy $\alpha$ can be calculated from 
Eve's information ${\bf Y}_E(0)$ with strategy $0$ (the passive attack),
and ${\bf Y}_E(0)$ is also calculated from ${\bf Y}_E(\alpha)$.
Hence, we have the equation
\begin{align}
I({\bf X};{\bf Y}_E)[0]
=&
I({\bf X};{\bf Y}_E)[\alpha], \Label{NCT1}
\end{align}
$I({\bf X};{\bf Y}_E)[\alpha]$ expresses the mutual information
between ${\bf X}$ and ${\bf Y}_E$ under the strategy $\alpha$.
\end{theorem}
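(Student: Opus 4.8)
The plan is to establish a bijection between the passive eavesdropped vector $\mathbf{Y}_E(0) = K_E\mathbf{X}$ and the active eavesdropped vector $\mathbf{Y}_E(\alpha)$ that holds for every fixed value $\mathbf{x}$ of the input, and then invoke the data-processing inequality in both directions to conclude the equality of mutual informations. First I would fix an input value $\mathbf{x}$ and observe that under the passive attack the quantity $K_E\mathbf{x}$ is a deterministic vector, call it $\mathbf{y}^{(0)}$. Under the active attack with strategy $\alpha$, the eavesdropped vector $\mathbf{y}^{(\alpha)}$ is by definition the solution of the fixed-point equation $\mathbf{y} = K_E\mathbf{x} + H_E\alpha(\mathbf{y})$, which by the uniqueness condition exists and is unique; hence $\mathbf{y}^{(\alpha)}$ is a well-defined function of $\mathbf{x}$, and in fact a function of $\mathbf{y}^{(0)} = K_E\mathbf{x}$ alone, since $\mathbf{x}$ enters the equation only through $K_E\mathbf{x}$.

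Next I would show the converse direction, namely that $\mathbf{y}^{(0)}$ can be recovered from $\mathbf{y}^{(\alpha)}$. This is immediate from the fixed-point equation itself: given $\mathbf{y}^{(\alpha)}$, Eve (who knows the strategy $\alpha$ and the matrix $H_E$) computes the injected noise $\mathbf{z} = \alpha(\mathbf{y}^{(\alpha)})$ and then recovers $K_E\mathbf{x} = \mathbf{y}^{(\alpha)} - H_E\mathbf{z} = \mathbf{y}^{(\alpha)} - H_E\alpha(\mathbf{y}^{(\alpha)})$, which is exactly $\mathbf{y}^{(0)}$. So there are deterministic maps $f$ and $g$ with $\mathbf{Y}_E(\alpha) = f(\mathbf{Y}_E(0))$ and $\mathbf{Y}_E(0) = g(\mathbf{Y}_E(\alpha))$; i.e., $f$ and $g$ are mutually inverse on the relevant ranges.

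Finally, since $\mathbf{Y}_E(\alpha) = f(\mathbf{Y}_E(0))$ and this map does not depend on $\mathbf{X}$ given $\mathbf{Y}_E(0)$, the Markov chain $\mathbf{X} - \mathbf{Y}_E(0) - \mathbf{Y}_E(\alpha)$ holds, so by the data-processing inequality $I(\mathbf{X};\mathbf{Y}_E(\alpha)) \le I(\mathbf{X};\mathbf{Y}_E(0))$. Applying the same argument with the roles reversed using $g$ gives the opposite inequality, and together these yield the equality \eqref{NCT1}. The bulk of the argument is really the uniqueness-condition bookkeeping; the only place requiring care is the claim that $\mathbf{y}^{(\alpha)}$ depends on $\mathbf{x}$ only through $K_E\mathbf{x}$, which I expect to be the main (though minor) obstacle — it follows because the defining equation \eqref{Uni} contains $\mathbf{x}$ solely via the term $K_E\mathbf{x}$, so two inputs with the same image under $K_E$ produce the same solution set, and uniqueness then forces the same solution. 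One should also note that no probabilistic assumption on $\mathbf{X}$ is used, so the statement holds for every input distribution.
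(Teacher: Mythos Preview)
Your argument is correct and follows essentially the same route as the paper's proof: use the uniqueness condition to write $\mathbf{Y}_E(\alpha)$ as a deterministic function of $\mathbf{Y}_E(0)=K_E\mathbf{X}$, invert via $\mathbf{Y}_E(0)=\mathbf{Y}_E(\alpha)-H_E\alpha(\mathbf{Y}_E(\alpha))$, and apply data processing in both directions. Your version is more explicit (particularly the observation that $\mathbf{x}$ enters \eqref{Uni} only through $K_E\mathbf{x}$, and the Markov-chain formulation), but the underlying idea is identical.
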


\begin{proof}
Since ${\bf Y}_E(0)=K_E {\bf X}$ and 
${\bf Y}_E(\alpha)=K_E {\bf X}+ H_E {\bf Z}$,
due to the uniqueness condition of the strategy $\alpha$, 
we can uniquely evaluate ${\bf Y}_E(\alpha)$
from ${\bf Y}_E(0)=K_E {\bf X}$ and $\alpha$.
Therefore, we have 
$I({\bf X};{\bf Y}_E)[0]\ge I({\bf X};{\bf Y}_E)[\alpha]$.
Conversely, since ${\bf Y}_E(0)$ is given as a function 
($Y_E(\alpha)-H_E {\bf Z}$)
of ${\bf Y}_E(\alpha)$, ${\bf Z}$, and $H_E$,
we have the opposite inequality.
\end{proof}

This theorem shows that the information leakage of the active attack with 
the strategy $\alpha$ is the same as 
the information leakage of the passive attack.
Hence, to guarantee the secrecy under an arbitrary active attack,
it is sufficient to show secrecy under the passive attack.
However, there is an example of non-linear network such that
this kind of reduction does not hold \cite{HOKC2}. 
In fact, even when the network does not have synchronization so that the
information transmission on an edges starts before the end of
the information transmission on the previous edge,
the above reduction theorem hold under the uniqueness condition.

\subsection{Construction of $K_B,K_E$ 
from concrete network model}\Label{PDG}
Next, we discuss how we can obtain the generic passive attack model \eqref{F2} 
from a concretely structured network coding, i.e. communications identified by directed edges and linear operations by parties identified by nodes.
We consider the unicast setting of network coding on a network, which is given 
as a directed graph $({V}, {E})$, 
where the set ${V}:=\{v(1), \ldots, v(m_2)\}$ of vertices expresses the set of nodes
and the set ${E}:=\{e(1), \ldots, e(m_1)\}$ of edges expresses the set of communication channels, where 
a communication channel means a packet in network engineering,
i.e., a single communication channel can transmit single character in 
$\FF_q$.
In the following, we identify 
the set ${E}$ with $\{1, \ldots, m_1\}$, i.e,
we identify the index of an edge with the edge itself.
Here, the directed graph $({V}, {E})$ is not necessarily acyclic.
When a channel transmits information from a node $v(i)\in {V}$ to another node $v(i')\in {V}$, it is written as $(v(i),v(i')) \in {E}$.

In the single transmission,
the source node has several elements of $\FF_q$
and sends each of them via its outgoing edges 
in the order of assigned number of edges.
Each intermediate node keeps received information via incoming edges.
Then, for each outgoing edge,
the intermediate node calculates one element of $\FF_q$ 
from previously received information,
and sends it via the outgoing edge.
That is, every outgoing information from a node $v(i)$ via a channel $e(j)$ 
depends only on the incoming information into the node $v(i)$ via channels 
$e(j')$ such that $j'<j$.
The operations on all nodes are assumed to be linear on the finite field $\FF_q$ with prime power $q$.
Bob receives the information ${\bf Y}_B$ in $\FF_q^{m_{4}}$ 
on the edges of a subset $E_B:=\{e(\zeta_B(1)), \ldots, e(\zeta_B(m_{4}))\}\subset E$,
where $\zeta_B$ is a strictly increasing function from $\{1, \ldots, m_4\}$ to $\{1, \ldots, m_1\}$.
Let $\tilde{X}_j$ be the information on the edge $e(j)$.
In the following, we describe the information on the $m_7:=m_1-m_3$ edges that are not directly linked to the source node
because $m_3$ expresses the number of Alice’s input symbols.
When the edge $e(j)$ is an outgoing edge of the node $v(i)$,
the information $\tilde{X}_{j}$ is given as a linear combination of 
the information on the edges incoming to the node $v(i)$.
We choose an $m_1 \times m_1$ matrix $\theta=(\theta_{j,j'})$ such that
$\tilde{X}_{j}= \sum_{j'}\theta_{j,j'} \tilde{X}_{j'}$, where
$\theta_{j,j'} $ is zero 
unless $ e(j')$ is an edge incoming to $v(i)$.
The matrix $\theta$ is the {\it coefficient matrix} of this network.

Now, from causality, we can
assume that each node makes the transmissions on the outgoing edges 
in the order of the numbers assigned to the edges.
At the first stage,
all $m_3$ information generated at the source node are directly transmitted via $e(1),\cdots e(m_3)$ respectively.
Then, at time $j$, the information transmission on the edge $e(j+m_3)$ is done 
Hence, naturally, we impose the condition
\begin{align}
\theta_{j,j'}=0 \hbox{ for } j' \ge j \Label{GTY},
\end{align}
which is called the {\it partial time-ordered condition for $\theta$.} 
Then, to describe the information on $m_7$ edges that are not directly linked to the source node,
we define $m_7$ $m_1 \times m_1$ matrices
$M_1, \ldots, M_{m_7}$.
The $j$-th $m_1 \times m_1$ matrix $M_j$ 
gives the information on the edge $e(j+m_3)$ 
as a function of the information on edges $\{e(j')\}_{1\le j'\le m_1}$
at time $j$.
The $j+m_3$-th row vector of the matrix $M_j$ is defined by 
$[\theta_{j+m_3,j'}]_{1\le j'\le m_1}$.
The remaining part of $M_j$, i.e.,
the $i$-th row vector for $i \neq j+m_3$
is defined by $[\delta_{i,j'}]_{1\le j'\le m_1}$
and $\delta_{i,j'}$ is the Kronecker delta.
Since $\sum_{i=1}^{m_3}(M_{j}\cdots M_1)_{j',i} X_i$
expresses the information on edge $e(j')$ at time $j$,
we have
\begin{align}
Y_{B,j}= \sum_{i=1}^{m_3}(M_{m_7}\cdots M_1)_{\zeta_B(j),i} X_i\Label{KOG}
\end{align}
While the output of the matrix $M_{m_7}\cdots M_1$ takes values in $\FF_q^{m_1}$,
we focus the projection $P_B$ to the subspace $\FF_q^{m_4}$ that corresponds to the $m_4$ components observed by Bob.
That is, $P_B$ is a $m_4 \times m_1$ matrix to satisfy $P_{B;i,j}=\delta_{\zeta_B(i),j}$.
Similarly, 
we use the projection $P_A$ (an $m_1 \times m_3$ matrix) as $P_{A;i,j}=\delta_{i,j}$.
Due to \eqref{KOG}, the matrix $K_B:=P_B  M_{m_7}\cdots M_1 P_A$ satisfies the first equation in \eqref{F2}.

The malicious adversary, Eve, wiretaps the information $Y_E$ in $\FF_q^{m_{6}}$ 
on the edges of a subset $E_E:=\{e(\zeta_E(1)), \ldots, e(\zeta_E(m_{6}))\}\subset E$,
where $\zeta_E$ is a strictly increasing function from 
$\{1, \ldots, m_6\}$ to $\{1, \ldots, m_1\}$.
Similar to \eqref{KOG}, we have
\begin{align}
Y_{E,j}= \sum_{i=1}^{m_3}(M_{m_7}\cdots M_1)_{\zeta_E(j),i} X_i\Label{KOG2}.
\end{align}
We employ the projection $P_E$ (an $m_6 \times m_1$ matrix) to the subspace $\FF_q^{m_6}$ that corresponds to the $m_6$ components eavesdropped by Eve.
That is, $P_{E;i,j}= \delta_{\zeta_E(i),j}$.
Then, we obtain the matrix $K_E$ as $P_E  M_{m_7}\cdots M_1 P_A$.
Due to \eqref{KOG}, the matrix $K_E:=P_E  M_{m_7}\cdots M_1 P_A$ satisfies the second equation in \eqref{F2}.

In summary 
the topology and dynamics (operations on the intermediate nodes) of the network, including the places of attached edges decides
the graph $(V,E)$, the coefficients $\theta_{i,j}$, and functions $\zeta_B,\zeta_E$,
uniquely gives the two matrices $K_B$ and $K_E$.
Subsection \ref{F1Ex} will give an example for this model.
Here, we emphasize that we do not assume the acyclic condition for the graph $({V}, {E})$.
We can use this relaxed condition because we have only one transmission in the current discussion.
That is, due to the partial time-ordered condition for $\theta$,
we can uniquely define our matrices $K_B$ and $K_E$, which is a similar way to \cite[Section V-B]{ACLY}\footnote{%
$\Lambda$ of Ahlswede-Cai-Li-Yeung corresponds to
the number of edges that are not connected to the source node in our paper.}.
However, when the graph has a cycle and we have $n$ transmissions,
there is a possibility of the correlation with the delayed information
dependently of the time ordering.
As a result, it is difficult to analyze secrecy for the cyclic network coding. 
 
\subsection{Construction of $H_B,H_E$ 
from concrete network model
}\Label{PDG2}
We identify the 
wiretap and addition model
from a concrete network structure.
We assume that Eve injects the noise in a part of edges $E_A \subset E$
as well as eavesdrops the edges $E_E$.

The elements of the subset $E_A$ are expressed as
$E_A=\{e(\eta(1)), \ldots, e(\eta(m_{5}))\}$ 
by using a function $\eta$ from 
$\{1, \ldots, m_5\}$ to $\{1, \ldots, m_1\}$,
where the function $\eta$ is not necessarily monotonically increasing function.
To give the matrices $H_B$ and $H_E$,
modifying the matrix $M_j$,
we define the new matrix $M_j'$ as follows
The $j+m_3$-th row vector of the new matrix $M_j'$ is defined by 
$[\theta_{j+m_3,j'}+\delta_{j+m_3,j'}]_{1\le j'\le m_1}$.
The remaining part of $M_j'$, i.e.,
the $i$-th row vector for $i \neq j+m_3$
is defined by $[\delta_{i,j'}]_{1\le j'\le m_1}$.
Since 
$\sum_{i=1}^{m_3}(M_{j}\cdots M_1)_{j',i} X_i
+\sum_{i'=1}^{m_5}(M_{j}'\cdots M_1')_{j',\eta(i')} Z_{i'} $
expresses the information on edge $e(j')$ at time $j$,
we have
\begin{align}
Y_{B,j}= &
\sum_{i=1}^{m_3}(M_{m_7}\cdots M_1)_{\zeta_B(j),i} X_i
\nonumber \\
&+
\sum_{i'=1}^{m_5}(M_{m_7}'\cdots M_1')_{\zeta_B(j),\eta(i')} Z_{i'} 
\Label{KOG3}\\
Y_{E,j}= &
\sum_{i=1}^{m_3}(M_{m_7}\cdots M_1)_{\zeta_E(j),i} X_i
\nonumber \\
&+\sum_{i'=1}^{m_5}(M_{m_7}'\cdots M_1'- I)_{\zeta_E(j),\eta(i')} Z_{i'}.
\Label{KOG4}
\end{align}
When Eve eavesdrops the edges $E_E \cap E_A$,
she obtains the information on $E_E\cap E_A$ before her noise injection.
Hence, to express her obtained information on $E_E\cap E_A$,
we need to subtract  her injected information on $E_E\cap E_A$.
Hence, we need $-I$ in the second term of \eqref{KOG4}.
We introduce the projection $P_{E,A}$ (an $m_1 \times m_5$ matrix) as 
$P_{E,A;i,j}=\delta_{i,\eta(j)}$.
Due to \eqref{KOG3} and \eqref{KOG4}, 
the matrices 
$H_B:=P_B  M_{m_7}'\cdots M_1' P_{E,A}$ 
and
$H_E:=P_E  (M_{m_7}'\cdots M_1'-I) P_{E,A}$ satisfy
conditions \eqref{E3} with the matrices
$K_B$ and $K_E$, respectively.
This model ($K_B$, $K_E$, $H_B$, $H_E$) to give \eqref{E3}
is called the {\it wiretap and addition model}
determined by $(V,E)$ and $(E_E, E_A,\theta)$, which expresses 
the topology and dynamics.

\subsection{Strategy and order of communication}\Label{PDG2-5}
To discuss the active attack, we see how the causal condition for 
the subsets $\{w_i\}_{1, \ldots, m_5}$ 
follows from the network topology in the wiretap and addition model.
We choose the domain index subsets $\{w_i \}_{1\le i \le m_5}$ for $\alpha$,
i.e., Eve chooses the added error $Z_i$ on the edge $e(\eta(i)) \in E_A$
as a function $\alpha_i $ of the vector $[Y_{E,j}]_{j \in w_i}$.
Since the order of Eve's attack is characterized by
the function $\eta$ from $\{1, \ldots, m_5\}$ to $E_A \subset \{1, \ldots, m_1\}$,
we discuss what condition for the pair $(\eta,\{w_i\}_i)$ guarantees 
the causal condition for the subsets $\{w_i\}_i$. 

First, 
one may assume that the tail node of 
the edge $e(j)$ sends the information to the edge $e(j)$
after the head node of the edge $e(j-1)$ receives the information to the edge $e(j-1)$.
Since this condition determines the order of Eve's attack, 
the function $\eta$ must be a strictly increasing function from 
$\{1, \ldots, m_5\}$ to $\{1, \ldots, m_1\}$.
Also, due to this time ordering, the subset $w_i$ needs to be $\{j| \eta(i) \ge \zeta_E(j)\}$ or its subset.
We call these two conditions 
the {\it full time-ordered condition for the function $\eta$ and the subsets $\{w_i\}_i$}.
Since the function $\eta$ is strictly increasing, Condition (A2) for the causal condition holds.
Since the relation \eqref{GTY} implies that 
$M_{m_7}'\cdots M_1'- I$ is a lower triangular matrix with zero diagonal elements,
the strictly increasing property of $\eta$ yield that
\begin{align}
H_{E;j,i}=0 \hbox{ when } \eta(i) \ge \zeta_E(j) \Label{F14},
\end{align}
which implies Condition (A1) for the causal condition.
In this way, the full time-ordered condition for the function $\eta$ and the subsets $\{w_i\}_i$ satisfies the causal condition.

However, the full time ordered condition does not hold in general 
even when we reorder the numbers assigned to the edges.
That is, 
if the network is not well synchronized, Eve can make an attack across several channels, i.e.,
it is possible that Eve might intercept (i.e., wiretap and contaminate) the information of an edge
before the head node of the previous edge receives the information on the edge.
Hence, we consider the case when the partial time-ordered condition holds, but
the full time-ordered condition does not necessarily hold\footnote{%
For an example, we consider the following case.
Eve gets the information on the first edge. Then, she gets 
the information on the second edge before she hands over 
the information on the first edge to the tail node of the first edge.
In this case, she can change the information on the first edge
based on the information on the first and second edges. Then, the time-ordered condition \eqref{F14} does not hold.}.
That is, the function $\eta$ from $\{1, \ldots, m_5\}$ to $E $ is injective but is not necessarily monotone increasing.
Given the matrix $\theta$,
we define the function 
$\gamma_\theta(j):=\min_{j'} \{ j'|   \theta_{j',j}\neq 0\}$.
Here, when no index $j'$ satisfies the condition $\theta_{j',j}\neq 0$,
$\gamma_\theta(j)$ is defined to be $m_1+1$.
Then, we say that the function $\eta$ and the subsets $\{w_i\}_i$ are {\it admissible under $\theta$} 
when $\{e(k) | k \in \im \eta \}=E_A$,
the subsets $\{w_i\}_i$ satisfy Condition (A2) for the causal condition, 
and any element $j \in w_{i}$ satisfies
\begin{align}
\zeta_E(j) < \gamma_\theta(\eta(i)).
\Label{LGY}
\end{align}
Here, $\im \eta$ expresses the image of the function $\eta$.
The condition \eqref{LGY} and the condition \eqref{GTY} imply the following condition; 
For $ j \in w_i$,
there is no sequence $\zeta_E(j)=j_1>j_2, \ldots >j_l=\eta(i) $ 
such that 
\begin{align}
\theta_{j_{i},j_{i+1}}\neq 0 .
\end{align}
This condition implies Condition (A1) for the causal condition.  
Since the admissibility under $\theta$ is natural,
even when the full time-ordered condition does not hold,
the causal condition can be naturally derived.

Given two admissible pairs $(\eta,\{w_i\}_i)$ and $(\eta',\{w_i'\}_i)$, 
we say that the pair $(\eta,\{w_i\}_i)$ is {\it superior to} 
$(\eta',\{w_i'\}_i)$ for Eve
when $w_{{\eta'}^{-1}(j)}'\subset w_{\eta^{-1}(j)} $ for 
any $j \in E_A$.
Now, we discuss the optimal choice of $(\eta, \{w_i\}_i)$ in this sense
when $E_A$ is given.
That is, we choose the subset $w_i$ as large as possible under 
the admissibility under $\theta$.
Then, we choose the bijective function $\eta_{o}$ from $\{1, \ldots, m_5\}$
to $E_A$ 
such that $\gamma_\theta \circ \eta_o$ is monotone increasing.
Then, we define $w_{o,i}:=\{ j| \zeta_E(j) < \gamma_\theta(\eta_o(i))\}$,
which satisfies the admissibility under $\theta$.
Conditions (A1) and (A2) for the causal condition.  
Further, when the pair $( \eta,\{w_i\}_i)$ is admissible under $\theta$,
the condition \eqref{LGY} implies 
$w_{\eta^{-1} (j)}\subset w_{o,\eta_o^{-1}(j)} $ for
$j \in E_A$, i.e., 
$w_{o,i} $ is the largest subset under the admissibility under $\theta$.
Hence, we obtain the optimality of $(\eta_o, \{w_{o,i}\}_i)$.
Although the choice of $\eta_o$ is not unique, 
the choice of $w_{o,\eta_o^{-1}(j)}$ for $j \in E_A$
 is unique.

\subsection{Secrecy in concrete network model}\Label{F1Ex}

\begin{figure}[htbp]
\begin{center}
\includegraphics[scale=0.4, angle=-90]{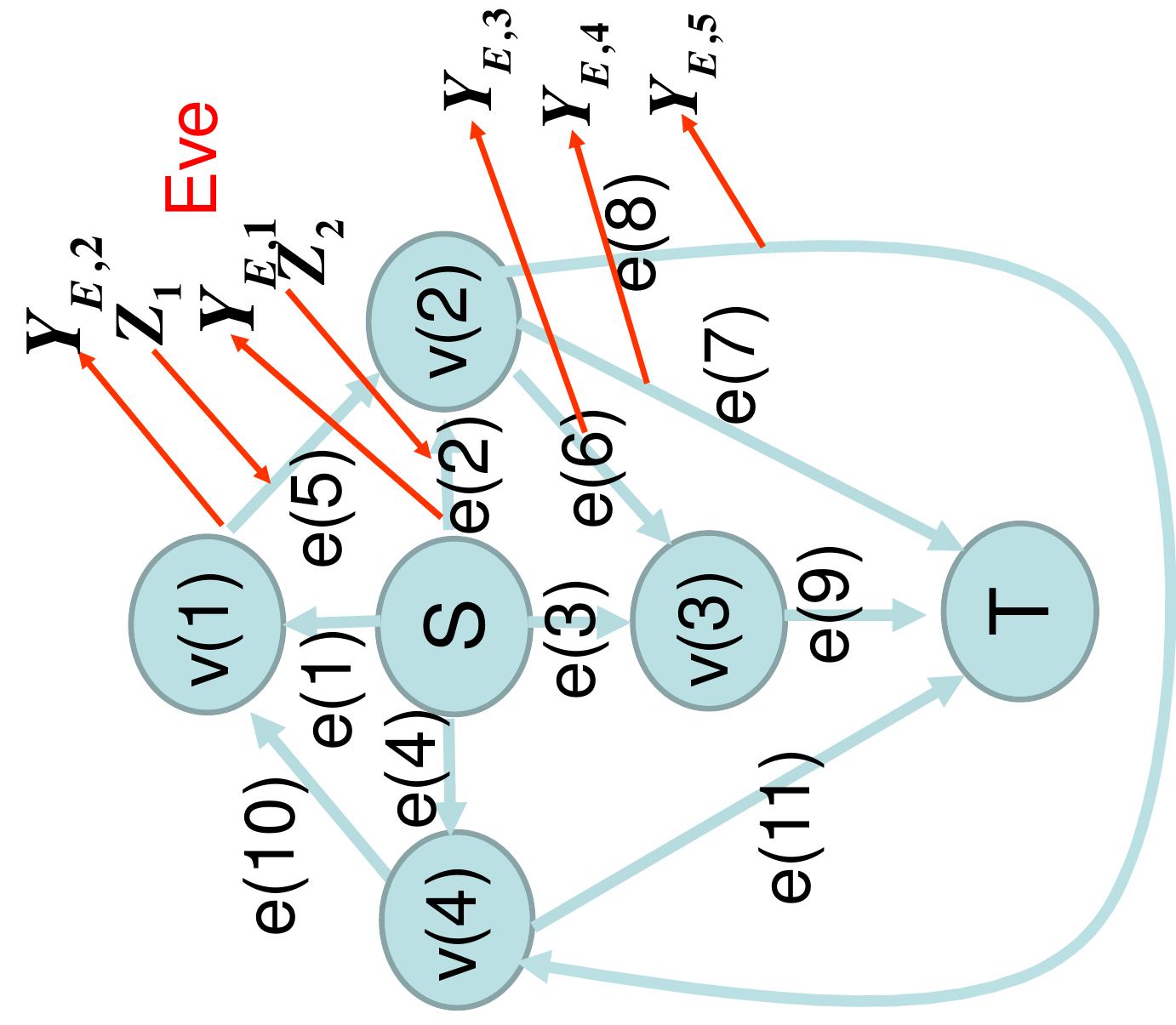}
\end{center}
\caption{Network of Subsection \ref{F1Ex} with name of edges}
\Label{F3}
\end{figure}%

\begin{figure}[htbp]
\begin{center}
\includegraphics[scale=0.4, angle=-90]{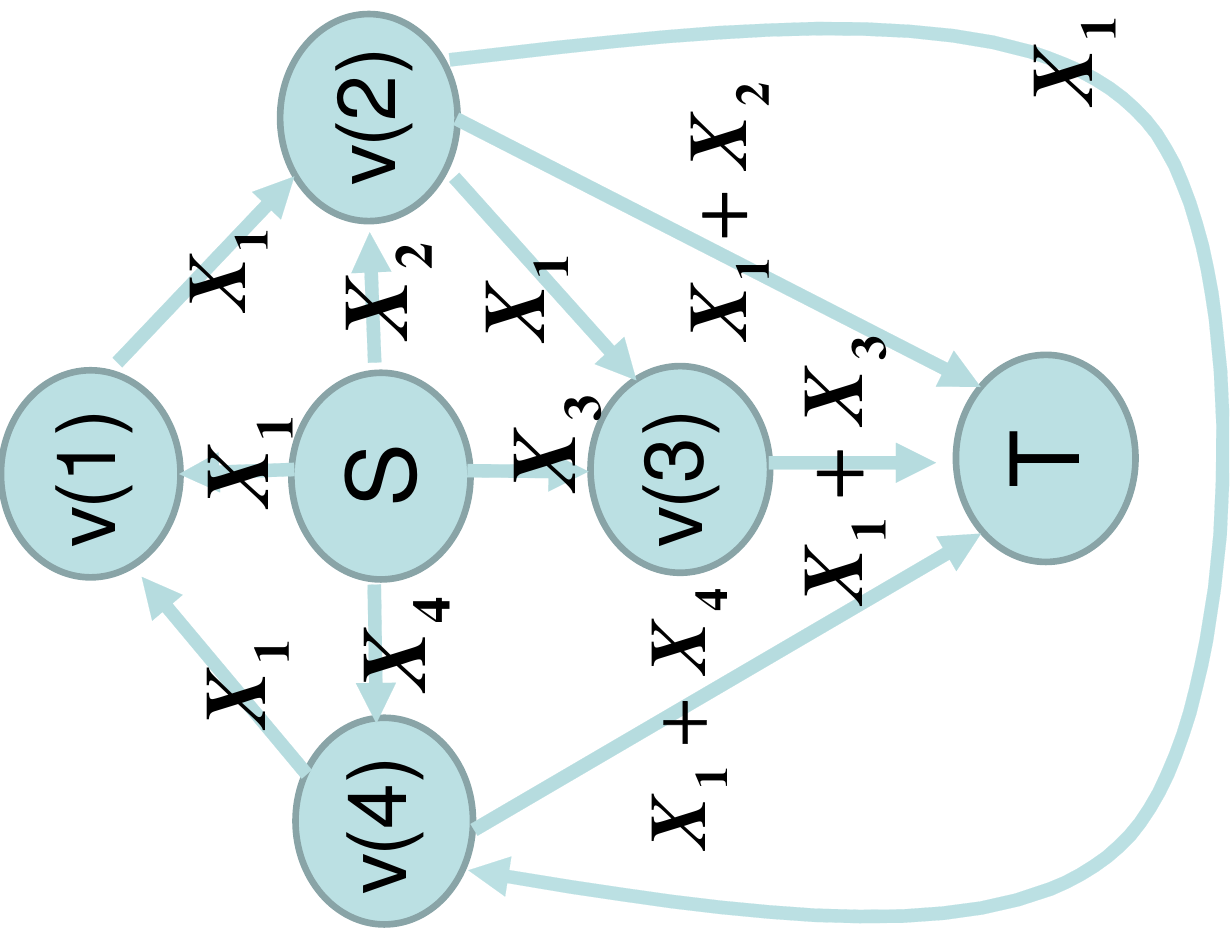}
\end{center}
\caption{Network of Subsection \ref{F1Ex} with network flow}
\Label{F1B}
\end{figure}%

In this subsection, as an example, we consider the network given in Figs. \ref{F3} and \ref{F1B}, which shows that
our framework can be applied to the network without synchronization.
Alice sends the variables $X_1,\ldots, X_4 \in \FF_q$ 
to nodes $v(1),v(2),v(3),$ and $v(4)$
via the edges
$e(1),e(2),e(3)$, and $e(4)$, respectively.
The edges $e(5),e(6),e(8), e(10)$
send the elements received from the edges 
$e(1), e(5),e(5),e(8)$, respectively.
The edges $e(7)$, $e(9)$, and $e(11)$ 
send the sum of two elements received from the edge pairs
$(e(2),e(5)),$ $(e(3),e(6)),$ and $(e(4),e(8)),$ respectively.

Bob received elements via the edges 
$e(7), e(9), e(11),$ which are written as $Y_{B,1}, Y_{B,2}, Y_{B,3}$, respectively.
Then, the matrix $K_B$ is given as 
\begin{align}
K_B= 
\left(
\begin{array}{cccc}
1 & 1 & 0 & 0 \\
1 & 0 & 1 & 0 \\
1 & 0 & 0 & 1 
\end{array}
\right).
\end{align}
Then, $m_3=4$ and $m_4=3$.

Now, we assume that Eve eavesdrops 
the edges $e(2),e(5), e(6),e(7),e(8)$, i.e., all edges connected to $v(2)$,
and contaminates
the edge $e(2),e(5)$.
Then, we set 
$\zeta_B(1)=7,\zeta_B(2)=9,\zeta_B(3)=11$
and 
$\zeta_E(1)=2,\zeta_E(2)=5,\zeta_E(3)=6,\zeta_E(4)=7,\zeta_E(5)=8$.
Eve can choose the function $\eta$ as 
\begin{align}
\eta(1)=5,
\eta(2)=2 \Label{eta-c}
\end{align}
while $\eta(1)=2, \eta(2)=5$ is possible.
In the following, we choose \eqref{eta-c}. 
Since $\gamma_\theta(2)=7$ and 
$\gamma_\theta(5)=6$,
the subsets $w_i$ are given as
\begin{align}
w_1:=w_{o,1}=\{ 1,2 \}, \quad
w_2:=w_{o,2}=\{ 1,2,3 \}
\end{align}
This case satisfies Conditions (A1) and (A2).
Hence, this model satisfies the causal condition.
Lemma \ref{LL2} guarantees that any strategy also satisfies the uniqueness condition.

We denote the observed information 
on the edges $e(2),e(5), e(6),e(7),e(8)$
by $Y_{E,1}, Y_{E,2},Y_{E,3}, Y_{E,4}, Y_{E,5}$.
As Fig. \ref{F3}, Eve adds $Z_{1},Z_2$ in edges $e(2),e(5)$.
Then, the matrices $H_B$, $K_E$, and $H_E$ are given as 
\begin{align}
H_B&= 
\left(
\begin{array}{cc}
1&1 \\
0&0 \\
0&0
\end{array}
\right), \quad
K_E= 
\left(
\begin{array}{cccc}
 0 & 1 & 0 & 0 \\
 1 & 0 & 0 & 0\\
 1 &0 & 0 & 0\\
 1  &1 & 0 & 0\\
 1& 0& 0 & 0 
\end{array}
\right), 
\nonumber \\
H_E&= 
\left(
\begin{array}{cc}
0 & 0 \\
0 & 0 \\
0 & 1 \\
1 & 1 \\
0 & 1 
\end{array}
\right).
\end{align}

In this case, to keep the secrecy of the message to be transmitted,
Alice and Bob can use coding as follows.
When Alice's message is $M \in \FF_q$,
Alice prepares scramble random number $L_1, L_2, L_3\in \FF_q$.
These variables are assumed to be subject to the uniform distribution independently.
She encodes them as 
$X_i= L_i$ for $i=1,\ldots, 3$ and $X_4= -M+L_1+L_2+L_3$.
As shown in the following, under this code, Eve cannot obtain any information for $M$
even though she makes active attack.
Due to Theorem \ref{T0},
it is sufficient to show the secrecy when $Z_i=0$.
Eve's information is $Y_{E,1}=X_2,Y_{E,2}=X_1,Y_{E,3}=X_1,Y_{E,4}= X_1+X_2$ and $Y_{E,5}=X_1$
and the message is $M=X_1+X_2+X_3-X_4$.
That is, her eavesdropping information is characterized by the vectors
$(0,1,0,0)$, $(1,0,0,0)$, $(1,0,0,0)$ and $(1,1,0,0)$
and the message is by the vector $(1,1,1,-1)$.
Since these vectors are linearly independent,
$X_1+X_2+X_3-X_4$ is independent of each of the variables $Y_{E,1},Y_{E,2},Y_{E,3},Y_{E,4},Y_{E,5}$.
Hence, the message is independent of her eavesdropping information. 

Indeed, 
the above attack can be considered as the following.
Eve can eavesdrop all edges connected to the intermediate node $v(2)$
and contaminate all edges incoming to the intermediate node $v(2)$.
Hence, it is natural to assume that Eve similarly eavesdrops and contaminates at another intermediate node $v(i)$. 
That is, 
Eve can eavesdrop all edges connected to the intermediate node $v(i)$
and contaminate all edges incoming to the intermediate node $v(i)$.
For all node $v(i)$, this code has the same secrecy against the above  Eve's attack for node $v(i)$. 

Furthermore, the above code has the secrecy even when the following attack.
\begin{description}
\item[(B1)]
Eve eavesdrops one of three edges $e(7),e(9),e(11)$ connected to the sink node,
and eavesdrops and contaminates one of the remaining eight edges
$e(1),e(2),e(3),e(4),e(5),e(6),e(8),e(10)$ that are not connected to the sink node.
\end{description}
Indeed, 
the vector characterizing the transmission on 
any one of three edges $e(7),e(9),e(11)$ has only two non-zero components,
and the vector characterizing the transmission on 
any one of eight edges
$e(1),e(2),e(3),e(4),e(5),e(6),e(8),e(10)$ has only one non-zero component.
Hence, any linear combination of the above two vectors has 
only three non-zero components at most.
Therefore, 
the vector $(1,1,1,-1)$ is not contained by  
the linear space spanned by the above two vectors.
Thus, when the message is $X_1+X_2+X_3-X_4$, 
the secrecy holds under the above attack (A).

\if0
For example, in the case with $v(3)$,
Eve's information is $X_3,X_6=X_1,X_8=X_1$ and $X_9= X_1+X_3$,
which corresponds to the vectors
$(0,0,1,0)$, $(1,0,0,0)$, $(1,0,0,0)$ and $(1,0,1,0)$.
Since any non-zero linear combination of them is linearly independent of $(1,1,1,1)$,
Theorem \ref{T0} guarantees the secrecy of the message even under the active attack.

Assume that Eve can eavesdrop and contaminate two of three edges $e(9),e(10),e(11)$,
and one of the remaining edges.
The vector $(1,1,1,1)$ is linearly independent of
two of $(1,1,0,0)$, $(1,0,1,0)$, $(1,0,0,1)$
and one of $(1,0,0,0)$, $(0,1,0,0)$, $(0,0,1,0)$, $(0,0,0,1)$.
Hence, in the same way as the above, we find that 
the message is independent of Eve's eavesdropping information 
when her attack is passive.
Hence, Theorem \ref{T0} guarantees the secrecy of the message even under the active attack.
\fi

\subsection{Problem in error detection in concrete network model}
\label{Se2F}
However, the network given in Figs. \ref{F3} and \ref{F1B}
has the problem for the detection of the error in the following meaning.
When Eve makes an active attack, 
Bob's recovering message is different from the original message due to the contamination.
Further, Bob cannot detect the existence of the error in this case.
It is natural to require the detection of the existence of the error
when the original message cannot be recovered
as well as the secrecy.
As a special attack model, we consider the following scenario 
with the attack (B1).

\begin{description}
\item[(B2)]
Our node operations are fixed to the way as Fig. \ref{F1B}.
\item[(B3)]
The message set ${\cal M}$ and all information on all edges are $\FF_2$.
\item[(B4)]
The variables $X_1, X_2,X_3,X_4$ are given as the output of the encoder.
The encoder on the source node can be chosen, but is restricted to linear.
It is allowed to use a scramble random number, which is an element of
${\cal L}:=\FF_2^k$ with a certain integer $k$. 
Formally,
the encoder is given as
as a linear function from ${\cal M} \times {\cal L}$
to $\FF_2^4$.
\item[(B5)]
The decoder on the sink node can be chosen dependently of the encoder
and independently of Eve's attack.
\end{description}
Then, 
it is impossible to make a pair of an encoder and a decoder such that
the secrecy holds and Bob can detect the existence of error.

This fact can be shown as follows.
In order to detect it, 
Alice needs to make an encoder such that 
the vector $(Y_{B,1},Y_{B,2},Y_{B,3})$ belongs to a linear subspace
because the detection can be done only by observing 
that the vector does not belongs to a certain linear subspace,
which can be written as
$\{(Y_{B,1}, Y_{B,2}, Y_{B,3})| c_1Y_{B,1}+c_2 Y_{B,2}+c_3 Y_{B,3}
=0\}$ with a non-zero vector $(c_1,c_2,c_3) \in \FF_2^3$.
That is, the encoder needs to be constructed so that 
the relation $c_1Y_{B,1}+c_2 Y_{B,2}+c_3 Y_{B,3}
=(c_1+c_2+c_3)X_1+c_1X_2+c_2X_3+c_3X_4=0$ 
holds unless Eve's injection is made.
Since our field is $\FF_2^3$,
we have three cases. 
(C1) $(c_1,c_2,c_3)$ is 
$(1,0,0)$, $(0,1,0)$, or $(0,0,1)$.
(C2) 
$(c_1,c_2,c_3)$ is 
$(1,1,0)$, $(0,1,1)$, or $(0,1,1)$.
(C3) 
$(c_1,c_2,c_3)$ is $(1,1,1)$.
If we impose another linear condition, 
the transmitted information is restricted into a one-dimensional subspace, which means that 
the message $M$ uniquely decides 
the vector $(Y_{B,1},Y_{B,2},Y_{B,3})$. 
Hence, if Eve eavesdrops one suitable variable among three variables $Y_{B,1},Y_{B,2},Y_{B,3}$, 
Eve can infer the original message.

In the first case (C1), one of three variables $Y_{B,1},Y_{B,2},Y_{B,3}$ is zero 
unless Eve's injection is made.
When $Y_{B,1}=0$, i.e., $(c_1,c_2,c_3)=(1,0,0)$,
Bob can detect an error on the edge $e(5)$ or $e(2)$
because the error on $e(5)$ or $e(2)$ affects $Y_{B,1}$ so that $Y_{B,1}$ is not zero.
However, Bob cannot detect any error on the edge $e(4)$
because the error does not affect $Y_{B,1}$.
The same fact can be applied to the case when $Y_{B,2}=0$.
When $Y_{B,3}=0$, Bob cannot detect any error on the edge $e(3)$
because the error does not affect $Y_{B,3}$.

In the second case (C2), 
two of three variables $Y_{B,1},Y_{B,2},Y_{B,3}$ have the same value unless Eve's injection is made.
When $Y_{B,1}=Y_{B,2}$, i.e., $(c_1,c_2,c_3)=(1,1,0)$,
Bob can detect an error on the edge $e(2)$ or $e(3)$
because the error on $e(2)$ or $e(3)$ affects $Y_{B,1}$ or $Y_{B,2}$ 
so that $Y_{B,1}+Y_{B,2}$ is not zero.
However, Bob cannot detect any error on the edge $e(4)$
because the error does not affect $Y_{B,1}$ nor $Y_{B,2}$.
Similarly, 
When $Y_{B,2}=Y_{B,3}$ ($Y_{B,1}=Y_{B,3}$), 
Bob cannot detect any error on the edge $e(2)$ ($e(3)$).

In the third case (C3), the relation $Y_{B,1}=Y_{B,2}+Y_{B,3}$ holds,
i.e., $(c_1,c_2,c_3)=(1,1,1)$.
Then, the linearity of the code implies that the message has the form
$a_1 Y_{B,1}+a_2 Y_{B,2}+a_3 Y_{B,3}$.
Due to the relation $Y_{B,1}=Y_{B,2}+Y_{B,3}$,
the value $
a_1 Y_{B,1}+a_2 Y_{B,2}+a_3 Y_{B,3}=(a_1 +a_2) Y_{B,2}+(a_1 +a_3) Y_{B,3}$ 
is limited to $Y_{B,1}$, $Y_{B,2}$, $Y_{B,3}$, or $0$
because our field is $\FF_2$.
Since the message is not a constant, it is limited to one of $Y_{B,1}$, $Y_{B,2}$, $Y_{B,3}$.
Hence, when it is $Y_{B,1}$, Eve can obtain the message by eavesdropping the edge $e(7)$.
In other cases, Eve can obtain the message in the same way.

To resolve this problem, we need to use this network multiple times.
Hence, in the next section, we discuss the case with multiple transmission.


\subsection{Wiretap and replacement model}\Label{MML}
In the above subsections, 
we have 
discussed the case when 
Eve injects the noise in the edges $E_A$ as well as eavesdrops the edges $E_E$.
In this subsection,
we assume that $E_A\subset E_E$
and
Eve eavesdrops the edges $E_E$ and replaces the information on the edges 
$E_A$ by other information.
While this assumption implies $m_5\le m_6$ and
the image of $\eta$ is included in the image of $\zeta_E$, 
the function $\eta$ does not necessarily equal the function $\zeta_E$
because the order that Eve sends her replaced information to the heads of edges
does not necessarily equal the order that Eve intercepts the information on the edges.
Also, this case belongs to general wiretap and addition model \eqref{E3} as follows.
Modifying the matrix $M_j$,
we define the new matrix $M_j''$ as follows.
When there is an index $i$ such that $\zeta_E(i)=j$, 
the $j+m_3$-th row vector of the new matrix $M_j''$ is defined by 
$[\delta_{j+m_3,j'}]_{1\le j'\le m_1}$ and
the remaining part of $M_j''$ is defined as the identity matrix.
Otherwise, $M_j''$ is defined to be $M_j$.
Also, we define another matrix $F$ as follows.
The $\zeta_E (i)$-th row vector of the new matrix $F$ is defined by 
$[\theta_{\zeta_E(i),j'}]_{1\le j'\le m_1}$ and
the remaining part of $F$ is defined as the identity matrix.
Hence, we have
\begin{align}
Y_{B,j}= &
\sum_{i=1}^{m_3}(M_{m_7}''\cdots M_1'')_{\zeta_B(j),i} X_i
\nonumber \\
&+
\sum_{i'=1}^{m_5}(M_{m_7}''\cdots M_1'')_{\zeta_B(j),\eta(i')} Z_{i'} 
\Label{KOG5}\\
Y_{E,j}= &
\sum_{i=1}^{m_3}(F M_{m_7}''\cdots M_1'')_{\zeta_E(j),i} X_i
\nonumber \\
&+
\sum_{i'=1}^{m_5}(F M_{m_7}''\cdots M_1'')_{\zeta_E(j),\eta(i')} Z_{i'}.
\Label{KOG6}
\end{align}
Then, we choose 
matrices $K_B'$, $K_E'$, $H_B'$, and $H_E'$ as
$K_B':=P_B  M_{m_7}''\cdots M_1'' P_{A}$, 
$K_E':=P_E  FM_{m_7}''\cdots M_1'' P_{A}$, 
$H_B':=P_B  M_{m_7}''\cdots M_1'' P_{E}^T$, 
and
$H_E':=P_E F M_{m_7}''\cdots M_1'' P_{E}^T$,
which satisfy conditions \eqref{E3} due to \eqref{KOG5} and \eqref{KOG6}.
This model ($K_B'$, $K_E'$, $H_B'$, $H_E'$) is called the {\it wiretap and replacement model}
determined by $(V,E)$ and $(E_E, E_A,\theta,\eta)$.
Notice that the projections $P_A,P_B,$ and $P_E$ are defined in Section \ref{PDG}.

Next, we discuss the strategy $\alpha'$ under the matrices $K_B'$, $K_E'$, $H_B'$, and $H_E'$ such that
the added error $Z_i$ is given as a function $\alpha_i' $ of the vector $[Y_{E,j}]_{j \in w_i}$.
Since the decision of the injected noise does not depend on the results of the decision,
we impose the causal condition defined in Definition \ref{Def3} for the subsets $w_i$.

When the relation $j \in w_i$ holds with $\zeta_E(j)=\eta(i)$,
a strategy $\alpha'$ on 
the wiretap and replacement model 
($K_B'$, $K_E'$, $H_B'$, $H_E'$)
determined by $(V,E)$ and $(E_E, \theta)$
is written by another strategy $\alpha$ on 
the wiretap and addition model
$K_B$, $K_E$, $H_B$, and $H_E$
determined by $(V,E)$ and $(E_E,\theta)$,
which is defined as
$\alpha_j([Y_{E,j'}]_{j' \in w_i}):=
\alpha_j'([\hat{Y}_{E,j'}]_{j' \in w_i})- Y_{E,j}$.
In particular, 
due to the condition \eqref{GTY},
the optimal choice $\eta_o,\{w_{o,i}\}$ 
under the partial time-ordered condition
satisfies 
the relation $j \in w_{o,i}$ holds with $\zeta_E(j)=\eta_o(i)$.
That is, under the partial time-ordered condition,
the strategy on the wiretap and replacement model 
can be written by another strategy on the wiretap and addition model.

However, if there is no synchronization among vertexes,
Eve can inject the replaced information to the head of an edge 
before the tail of the edge sends the information to the edge.
Then, the partial time-ordered condition does not hold.
In this case, the relation $j \in w_i$ does not necessarily hold with $\zeta_E(j)=\eta(i)$.
Hence, a strategy $\alpha'$ on 
the wiretap and replacement model 
($K_B'$, $K_E'$, $H_B'$, $H_E'$)
cannot be necessarily written as another strategy on 
the wiretap and addition model
($K_B$, $K_E$, $H_B$, $H_E$).

To see this fact, we discuss an example given in Section \ref{F1Ex}.
In this example, 
the network structure of the wiretap and replacement model 
is given by Fig. \ref{F3B}.

\begin{figure}[htbp]
\begin{center}
\includegraphics[scale=0.38, angle=-90]{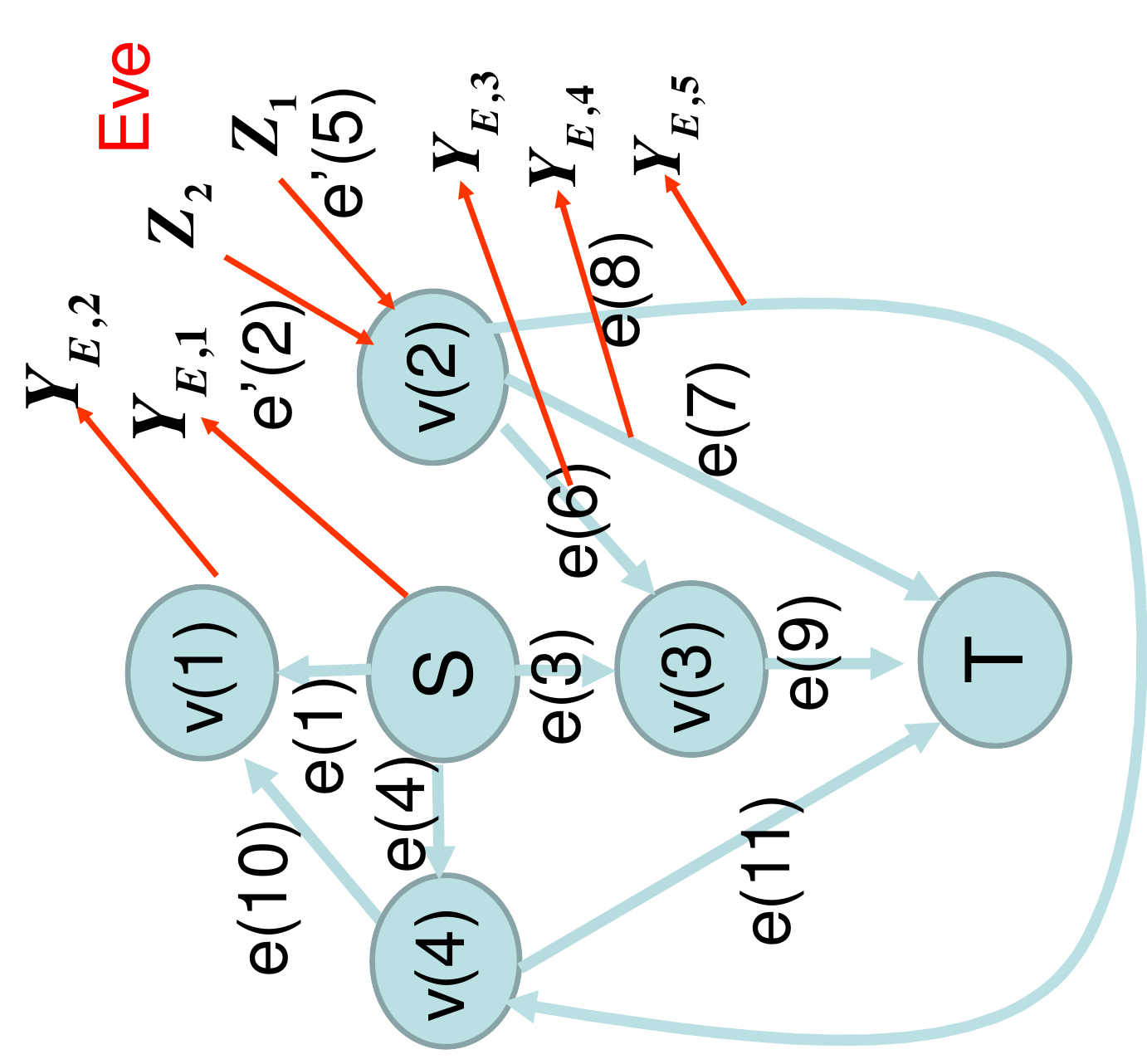}
\end{center}
\caption{Network of Section \ref{F1Ex} with wiretap and replacement model.
Eve injects the replaced information on the edges $e'(2)$ and $e'(5)$.}
\Label{F3B}
\end{figure}%


\section{Multiple transmission setting}\Label{S2-5}
\subsection{General model}\Label{S-25-1}
Now, we consider the $n$-transmission setting, where Alice uses the same network $n$ times to send a message to Bob.
Alice's input variable (Eve's added variable) is given as 
a matrix $X^n=({\bf X}_1, \ldots, {\bf X}_n) \in\FF_q^{m_{{3}} \times n}$ 
(a matrix $Z^n=({\bf Z}_1, \ldots, {\bf Z}_n) \in\FF_q^{m_{5} \times n}$),
and Bob's (Eve's) received variable is given as 
a matrix $Y_B^n=({\bf Y}_{B,1}, \ldots, {\bf Y}_{B,n})\in\FF_q^{m_{{4}} \times n}$ 
(a matrix $Y_E^n=({\bf Y}_{E,1}, \ldots, {\bf Y}_{E,n})\in\FF_q^{m_{6} \times n}$).
Then, we consider the following model as
\begin{align}
Y_B^n&=K_B X^n+ H_B Z^n, \Label{F1n} \\
Y_E^n&=K_E X^n+ H_E Z^n, \Label{F2n}
\end{align}
whose realization in a concrete network will be discussed in Sections \ref{S-25-2} and \ref{simul}.
Notice that the relations \eqref{F1n} and \eqref{F2n} with $H_E=0$
(only the relation \eqref{F1n}) were treated as the starting point of the paper
\cite{Yao2014} (the papers \cite{JLKHKM,Jaggi2008,JL}).

In this case, regarding $n$ transmissions of one channel
as $n$ different edges,
we consider the directed graph composed of $n m_5$ edges.
Then, Eve's strategy $\alpha^n$
is given as $n m_5$ functions $\{\alpha_{i,l}\}_{1\le i \le m_5, 1\le l \le n}$ 
from $Y_E^n$ to the respective components of $Z^n$.
In this case, 
we extend the uniqueness condition to the $n$-transmission version.
\begin{definition}\Label{D-U-n}
For any value of $K_E x^n$,
there uniquely exists $y^n \in \FF_q^{m_6 \times n} $ such that
\begin{align}
y^n= K_E x^n+ H_E \alpha^n(y^n).\Label{Uni2}
\end{align}
This condition is called the {\it $n$-uniqueness condition}.
\end{definition}
Since we have $n$ transmissions on each channel,
the matrix $\theta$ is given as an $(n m_1)\times (n m_1)$ matrix.
In the following,
we see how the matrix $\theta$ is given 
and how the $n$-uniqueness condition is satisfied
in a more concrete setting.

\subsection{Multiple transmission setting with sequential transmission}\Label{S-25-2}
This section discusses how the model given in Section \ref{S-25-1} can be realized in 
the case with sequential transmission as follows.
Alice sends the first information ${\bf X}_1$.
Then, 
Alice sends the second information ${\bf X}_2$.
Alice sequentially 
sends the information ${\bf X}_3, \ldots ,{\bf X}_n$.
Hence, when an injective function $\tau_E$ from $\{1,\ldots, m_1\}\times \{1,\ldots, n\}$
to $\{1,\ldots,n m_1\}$ gives the time ordering of $n m_1$ edges,
it satisfies the condition
\begin{align}
\tau_E(i,l)\le \tau_E(i',l')
\hbox{ when } i\le i' \land l\le l'.\Label{O1}
\end{align}
Here, we assume that the topology and dynamics of the network 
and the edge attacked by Eve
do not change during $n$ transmissions,
which is called the {\it stationary condition}.
All operations in intermediate nodes are linear.
Also, we assume that 
the time ordering on the network flow does not cause 
any correlation with the delayed information like Fig. \ref{F3}
unless Eve's injection is made, i.e.,
the $l$-th information ${\bf Y}_{B,l}$ received by Bob
is independent of ${\bf X}_1, \ldots, {\bf X}_{l-1},{\bf X}_{l+1}, \ldots, {\bf X}_n$,
which is called the {\it independence condition}.
The independence condition means that there is no correlation with the delayed information.
Due to the stationary and independence conditions,
the $(n m_1)\times (n m_1)$ matrix $\theta$ satisfies that
\begin{align}
\theta_{(i,l),(j,k)}= \bar{\theta}_{i,j}\delta_{k,l},\Label{O2} 
\end{align}
where $\bar{\theta}_{i,j}:=\theta_{(i,1),(j,1)}$.
When the $m_1\times m_1$ matrix $\bar{\theta} $
satisfies the partial time-ordered condition \eqref{GTY},
due to \eqref{O1} and \eqref{O2},
the $(n m_1)\times (n m_1)$ matrix $\theta$
satisfies the partial time-ordered condition \eqref{GTY}
with respect to the time ordering $\tau_E$.
Since the stationary condition guarantees that 
the edges attacked by Eve do not change during $n$ transmissions,
the above condition for $\theta$ implies 
 the model \eqref{F1n} and \eqref{F2n}.
This scenario is called the {\it $n$-sequential transmission}.

Since the independence condition is not so trivial,
it is needed to discuss when it is satisfied.
If the $l$-th transmission has no correlation with the delayed information
of the previous transmissions for $l=2, \ldots, n$,
the independence condition holds.
In order to satisfy the above independence condition,
the acyclic condition for the network graph is often imposed.
This is because any causal time ordering on the network flow does not cause
any correlation with the delayed information and achieves the max-flow if the network graph has no cycle \cite{YLCZ}.
In other words, if the network graph has a cycle,  
there is a possibility that
a good time ordering on the network flow that causes correlation with the delayed information.
However, there is no relation between the relations \eqref{F1n} and \eqref{F2n}
and the acyclic condition for the network graph,
and the relations \eqref{F1n} and \eqref{F2n} directly depend on 
the time ordering on the network flow.
That is, the acyclic condition for the network graph is not equivalent to the existence of the effect of delayed information.
Indeed, if we employ breaking cycles on intermediate nodes \cite[Example 3.1]{YLCZ}, 
even when the network graph has cycles, 
we can avoid any correlation with the delayed information\footnote{
To handle a time ordering with delayed information,
one often employs a convolution code \cite{KM}.
It is used in sequential transmission, and requires synchronization among all nodes.
Also, all the intermediate nodes are required to make a cooperative coding operation
under the control of the sender and the receiver.
if we employ breaking cycles
we do not need such synchronization as well as avoiding any correlation with the delayed information.}.
Also, see the example given in Section \ref{S81}.

To extend the causality condition, we focus on  
the domain index subsets $\{w_{i,l}\}_{1 \le i \le  m_5, 1\le l \le n}$
of $\{1, \ldots ,m_6 \}\times \{1, \ldots, n\}$
for Eve's strategy $\alpha^n=\{\alpha_{i,l}\}_{1\le i \le m_5, 1\le l \le n}$.
Then, we define the causality condition under the order function $\tau_E$. 
\begin{definition}\Label{Def3}
We say that 
the domain index subsets $\{w_{i,l}\}_{i,l}$ satisfy the {\it $n$-causal condition}
under the order function $\tau_E$ and the function $\eta$ from $\{1, \ldots, m_5\}$ to $\{ 1, \ldots,  m_1\}$
when the following two conditions hold;
\begin{description}
\item[(A1')] The relation $ H_{E;j,i} = 0$ holds for $(j,l) \notin w_{i,l}$.
\item[(A2')] 
The relation $ w_{i,l} \subseteq w_{i',l'}$ holds
when $\tau_E(\eta(i),l)\le \tau_E(\eta(i'),l')$.
\end{description}
\end{definition}

Next, we focus on the domain index subsets $\{w_{i,l}\}_{i,l}$
and 
the function $\eta$ from $\{1, \ldots, m_5\}$ to 
$\{ 1, \ldots,  m_1\}$.
We say that the pair $(\eta, \{w_{i,l}\}_{i,l})$
are {\it $n$-admissible under $\bar{\theta} $}
under the order function $\tau_E$
when $\{e(k) | k \in \im \eta \}=E_A$,
the subsets $\{w_{i,l}\}_{i,l}$ satisfy Condition (A2') for the $n$ causal condition, 
and any element $(j,l') \in w_{i,l}$ satisfies
\begin{align}
\tau_E(\zeta_E(j),l') < \gamma_{\bar{\theta}}(\eta(i),l).
\Label{LGY-n}
\end{align}
where
the function $\gamma_{\bar{\theta}}$ is defined as
\begin{align}
\gamma_{\bar{\theta}}(j,l):=
\min_{j'} \{ \tau_E (j',l)|   \bar{\theta}_{j',j}\neq 0\}.
\end{align}
Here, when no index $j'$ satisfies the condition $\bar{\theta}_{j',j}\neq 0$,
$\gamma_{\bar{\theta}}(j,l)$ is defined to be $n m_1+1$.
In the same way as Section \ref{PDG2-5},
we find that 
the $n$-admissibility of the pair $(\eta, \{w_{i,l}\}_{i,l})$
implies the $n$-causal condition under $\tau_E$ and $\eta$
for the domain index subsets $\{w_{i,l}\}_{i,l}$.

Given two $n$-admissible pairs $(\eta, \{w_{i,l}\}_{i,l})$
and $(\eta', \{w_{i,l}'\}_{i,l})$,
we say that the pair $(\eta, \{w_{i,l}\}_{i,l})$ is {\it superior to} 
$(\eta', \{w_{i,l}'\}_{i,l})$ for Eve
when $w_{{\eta'}^{-1}(j),l}'\subset w_{\eta^{-1}(j),l} $ for 
$j \in E_A$ and $l=1, \ldots , n$.
Then, we choose the bijective function $\tau_{E,\eta}$ from 
$\{1,\ldots, m_5\}\times \{1,\ldots, n\}$
to $\{1,\ldots,n m_5\}$ 
such that $\gamma_{\bar\theta} \circ\eta \circ \tau_{E,\eta}^{-1}$ is monotone increasing, where
$\gamma_{\bar\theta} \circ\eta$ is defined as
$\gamma_{\bar\theta} \circ\eta(i,l)=\gamma_{\bar\theta} (\eta(i),l)$.
The function $\tau_{E,\eta}$ expresses the order of Eve's contamination.
Then, we define $w_{\eta,i,l}:=\{ (j,l')| 
\tau_E(\zeta_E(j),l') < \gamma_{\bar{\theta}}(\eta(i),l)\}$,
which satisfies the $n$-admissibility under  $\bar{\theta}$ and  the order function $\tau_E$.

Further, when the pair $(\eta', \{w_{i,l}\}_{i,l})$ is $n$-admissible under $\bar\theta$ and $\tau_E$,
the condition \eqref{LGY-n} implies 
$w_{{\eta'}^{-1}(j),l}\subset w_{\eta,\eta^{-1}(j),l}$ for
$j \in E_A$ and $l=1, \ldots , n$,
i.e., 
$w_{\eta,i,l} $ is the largest subset under the $n$ admissibility under $\bar\theta$ and  $\tau_E$.
Hence, we obtain the optimality of $(\eta, \{w_{\eta,i,l}\}_{i,l})$
when $\bar\theta$, $\tau_E$, and $E_A$ are given.
Although the choice of $\eta$ is not unique, 
the choice of $w_{\eta,\eta^{-1}(j),l}$ for $j \in E_A$ and $l=1, \ldots , n$
is unique when $\bar\theta$, $\tau_E$, and $E_A$ are given.

In the same way as Lemma \ref{LL2}, we find that 
the $n$-causal condition with sequential transmission guarantees the 
$n$-uniqueness condition as follows.

\begin{lemma}\Label{LL2-2}
When 
a strategy $\alpha$ for the $n$-sequential transmission has domain index subsets to satisfy the $n$-causal condition,
the strategy $\alpha$ satisfies the $n$-uniqueness condition.
\end{lemma}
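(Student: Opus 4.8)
The plan is to mirror the induction used in the proof of Lemma \ref{LL2}, but now carried out over the totally ordered index set supplied by $\tau_{E,\eta}$ rather than over the linear index $i=1,\ldots,m_5$. First I would fix a strategy $\alpha^n=\{\alpha_{i,l}\}$ with domain index subsets $\{w_{i,l}\}$ satisfying the $n$-causal condition under $\tau_E$ and $\eta$, and fix an arbitrary value of $K_E x^n$. The goal is to show that, for every pair $(i,l)$ and every $(j',l')\in w_{i,l}$, the component $y_{j',l'}$ of any solution $y^n$ of \eqref{Uni2} is uniquely determined as a function of $K_E x^n$; applied with $w_{i,l}$ replaced by the full index set $\{1,\ldots,m_6\}\times\{1,\ldots,n\}$ (which is covered because the union of the $w_{i,l}$ over the largest choice $w_{\eta,i,l}$ exhausts it, or else the missing components enter neither $H_E$ nor any $\alpha_{i,l}$ and are trivially equal to the corresponding component of $K_E x^n$), this yields the $n$-uniqueness condition.

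The induction is along the order $\tau_{E,\eta}$ on contamination events: enumerate the pairs $(i,l)$ so that $\gamma_{\bar\theta}\circ\eta\circ\tau_{E,\eta}^{-1}$ is monotone increasing, and induct on this position. At the base step, for $(j',l')\in w_{i,l}$ with $(i,l)$ minimal, Condition (A1') forces the relevant columns of $H_E$ hitting row $(j',l')$ to vanish for every contamination event that could influence $y_{j',l'}$, so $y_{j',l'}=(K_E x^n)_{j',l'}$. For the inductive step, take $(i,l)$ and $(j',l')\in w_{i,l}$; by \eqref{F2n}, $y_{j',l'}-(K_E x^n)_{j',l'}=(H_E Z^n)_{j',l'}$, and by (A1') the only noise components $Z_{i'',l''}=\alpha_{i'',l''}(\cdot)$ that can contribute are those with $(j',l')\in w_{i'',l''}$; by (A2') every such $(i'',l'')$ satisfies $w_{i'',l''}\supseteq$ nothing larger, rather $\tau_E(\eta(i''),l'')<\tau_E(\eta(i),l)$ up to the tie-breaking built into $\tau_{E,\eta}$ (this is exactly the role of \eqref{LGY-n} and the definition of $\gamma_{\bar\theta}$, which guarantee strict precedence of every eavesdropping event feeding $\alpha_{i'',l''}$), so $(i'',l'')$ precedes $(i,l)$ in the enumeration. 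Hence each such $Z_{i'',l''}$ is a function of $\{y_{a,b}\}_{(a,b)\in w_{i'',l''}}$, each entry of which is, by the inductive hypothesis, a function of $K_E x^n$. Therefore $y_{j',l'}$ is a function of $K_E x^n$, closing the induction.

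The main obstacle I anticipate is bookkeeping the ordering correctly: in Lemma \ref{LL2} the time ordering of injections coincided with the natural index $i$, whereas here the relevant order is the composite $\gamma_{\bar\theta}\circ\eta$ pulled back by $\tau_{E,\eta}$, and one must check that (A2') together with \eqref{LGY-n} really does imply that every eavesdropping index in $w_{i,l}$ strictly precedes the contamination event $(i,l)$ in this composite order — i.e. that $(j',l')\in w_{i,l}$ gives $\tau_E(\zeta_E(j'),l')<\gamma_{\bar\theta}(\eta(i),l)$, and that whenever $(j',l')$ itself equals some contamination-affected coordinate $(\eta(i''),l'')$ the event $(i'',l'')$ has already been processed. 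This is precisely the content already extracted in Section \ref{PDG2-5} and its $n$-transmission analogue (the remark that $n$-admissibility implies the $n$-causal condition, plus the monotonicity of $\gamma_{\bar\theta}\circ\eta\circ\tau_{E,\eta}^{-1}$), so the step reduces to invoking those facts; the remaining verification that $M_{m_7}'\cdots M_1'-I$ is strictly lower triangular with respect to $\tau_E$ — which is what makes (A1') compatible with \eqref{LGY-n} — follows from \eqref{O1}, \eqref{O2}, and \eqref{GTY} exactly as in the single-transmission case. No genuinely new estimate is needed; the proof is a transcription of Lemma \ref{LL2} with ``$i=1,\ldots,m_5$'' replaced by ``$(i,l)$ ordered by $\tau_{E,\eta}$''.
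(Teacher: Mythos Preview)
Your approach is correct in spirit and is morally the same argument as the paper's, but the paper packages it far more economically. Rather than re-running the induction of Lemma~\ref{LL2} with the index $i$ replaced by pairs $(i,l)$ and carefully tracking the composite order, the paper simply observes that the $n$-sequential transmission is a single transmission on the ``big graph'' with edge set $\{e(i,l)\}_{1\le i\le m_1,\,1\le l\le n}$ and vertex set $\{v(j,l)\}_{1\le j\le m_2,\,1\le l\le n}$, with coefficient matrix given by \eqref{O2} and edges numbered by $\tau_E$. On this big graph the $n$-causal and $n$-uniqueness conditions become the ordinary causal and uniqueness conditions, so Lemma~\ref{LL2} applies verbatim. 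This buys you precisely the bookkeeping you flagged as the main obstacle: you never have to argue that $\tau_{E,\eta}$ and the order implicit in (A2') are compatible, because the big-graph reduction hands you a single linear order on $nm_5$ injection events and (A2') is exactly condition (A2) for that order. Your explicit induction would work once you replace the ordering by $\tau_{E,\eta}$ (which is defined via $\gamma_{\bar\theta}$ and is used only to describe the \emph{optimal} domain subsets) with the ordering by $\tau_E\circ\eta$ (which is what (A2') actually refers to); as written you conflate the two, and they need not coincide.
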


\begin{proof}
Consider a big graph composed of $n m_1$ edges $\{e(i,l)\}_{1\le i\le m_1, 1\le l \le n}$ and $n m_2$ vertecies $\{v(j,l)\}_{1\le j\le m_2, 1\le l \le n}$.
In this big graph, the coefficient matrix is given in \eqref{O2}.
We assign the $n m_1$ edges the number $\tau_E(i,l)$.
The $n$-causal and $n$-uniqueness conditions correspond to 
the causal and uniqueness conditions of this bog network, respectively.
Hence, Lemma \ref{LL2} implies Lemma \ref{LL2-2}.
\end{proof}

\subsection{Multiple transmission setting with simultaneous transmission}\Label{simul}
We consider anther scenario
to realize the model given in Section \ref{S-25-1}.
Usually, we employ an error correcting code for the information transmission on the edges in our graph.
For example, when the information transmission is done by wireless communication, 
an error correcting code is always applied.
Now, we assume that the same error correcting code is used on all the edges.
Then, we set the length $n$ to be the same value as the transmitted information length of the 
error correcting code.
In this case, $n$ transmissions are done simultaneously in each edge.
Each node makes the same node operation for $n$ transmissions,
which implies the condition \eqref{O2} for  
the $(n m_1)\times (n m_1)$ matrix $\theta$.
Then, the relations \eqref{F1n} and \eqref{F2n} hold 
because the delayed information does not appear.
This scenario is called the {\it $n$-simultaneous transmission}.

In fact, when we focus on the mathematical aspect, 
the $n$-simultaneous transmission can be regarded as a special case of
the $n$-sequential transmission.
In this case, the independence condition always holds even when the network has a cycle.
Further, the $n$-uniqueness condition can be derived in a simpler way without
discussing the $n$-causal condition as follows.

In this scenario, given a function $\eta$ from $\{1, \ldots, m_5\}$ to $E_A \subset \{1, \ldots, m_1\}$,
Eve chooses the added errors $(Z_{i,1}, \ldots, Z_{i,n}) \in \FF_q^n$ 
on the edge $e(\eta(i)) \in E_A$
as a function $\alpha_i $ of the vector $[Y_{E,j}]_{j \in w_i}$
with subsets $\{w_i \}_{1\le i \le m_5}$ of
$ \{1, \ldots, m_6\}$.
Hence, in the same way as the single transmission,
domain index subsets for $\alpha$ are given as
subsets $w_i\subset \{1, \ldots, m_6\}$ for $i \in \{1, \ldots, m_5\}$.
In the same way as Lemma \ref{LL2}, we have the following lemma.

\begin{lemma}\Label{LL2-3}
When 
a strategy $\alpha$ for the $n$-simultaneous transmission has domain index subsets to satisfy the causal condition, 
the strategy $\alpha$ satisfies the $n$-uniqueness condition.
\end{lemma}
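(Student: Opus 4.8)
The plan is to re-run the induction in the proof of Lemma~\ref{LL2} verbatim, with the scalar symbols replaced by length-$n$ row vectors. For $j\in\{1,\dots,m_6\}$ write $\mathbf{y}_j=(Y_{E,j,1},\dots,Y_{E,j,n})\in\FF_q^n$ for the $j$-th row of $y^n$ and $\mathbf{z}_i=(Z_{i,1},\dots,Z_{i,n})\in\FF_q^n$ for the $i$-th row of $Z^n$. Because \eqref{F1n}--\eqref{F2n} are matrix products over $\FF_q$, the matrices $K_E$ and $H_E$ act on these rows by scalar multiplication, so \eqref{Uni2} reads row by row as $\mathbf{y}_j=(K_E x^n)_{j,\cdot}+\sum_{i=1}^{m_5}H_{E;j,i}\,\mathbf{z}_i$, while by the definition of the $n$-simultaneous transmission $\mathbf{z}_i=\alpha_i([\mathbf{y}_j]_{j\in w_i})$ depends only on the rows indexed by $w_i$. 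This is formally the single-transmission fixed-point equation \eqref{Uni} with the symbol alphabet $\FF_q$ enlarged to $\FF_q^n$, and the hypothesis is precisely Conditions (A1) and (A2) of the causal condition.

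With this dictionary, I would show by induction on $i=1,\dots,m_5$ that every row $\mathbf{y}_j$ with $j\in w_i$ is a function of $K_E x^n$. For $i=1$: Condition (A2) gives $w_1\subseteq w_{i'}$ for all $i'$, so $j\in w_1$ forces $j\in w_{i'}$ for every $i'$ and then Condition (A1) gives $H_{E;j,i'}=0$ for every $i'$; hence $\mathbf{y}_j=(K_E x^n)_{j,\cdot}$. For the inductive step, take $j\in w_{i+1}\setminus w_i$. By (A2) we have $j\in w_{i'}$ for all $i'\ge i+1$, so (A1) kills those terms and $\mathbf{y}_j-(K_E x^n)_{j,\cdot}=\sum_{i'=1}^{i}H_{E;j,i'}\,\mathbf{z}_{i'}$ involves only $\mathbf{z}_1,\dots,\mathbf{z}_i$. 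Each such $\mathbf{z}_{i'}=\alpha_{i'}([\mathbf{y}_j]_{j\in w_{i'}})$ with $w_{i'}\subseteq w_i$, and by the induction hypothesis all rows indexed by $w_i$ are functions of $K_E x^n$; hence $\mathbf{z}_1,\dots,\mathbf{z}_i$, and therefore $\mathbf{y}_j$, are functions of $K_E x^n$. After step $i=m_5$, all rows indexed by $w_{m_5}$ are functions of $K_E x^n$; since $w_{i'}\subseteq w_{m_5}$ for all $i'$, every $\mathbf{z}_{i'}$ is then a function of $K_E x^n$, and the row equation makes every $\mathbf{y}_j$ a function of $K_E x^n$. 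This is exactly the $n$-uniqueness condition.

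As an alternative I would note only in passing that the $n$-simultaneous transmission is a special case of the $n$-sequential one, and a causal strategy with subsets $w_i$ induces an $n$-causal one with $w_{i,l}:=\{(j,l'):j\in w_i,\ 1\le l'\le n\}$, so Lemma~\ref{LL2-2} applies directly. In either route there is no real obstacle; the only point needing care is the bookkeeping of the symbol alphabet: since in the simultaneous case $\alpha_i$ emits the whole block $\mathbf{z}_i$ at once and $w_i$ does not depend on $l$, the object to substitute into the argument of Lemma~\ref{LL2} is the vector-valued symbol $\mathbf{y}_j\in\FF_q^n$ rather than a scalar. Once this is fixed the combinatorial content of the argument is unchanged, because $K_E$ and $H_E$ still act $\FF_q$-linearly on each column.
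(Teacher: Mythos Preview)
Your proposal is correct and matches the paper's approach exactly: the paper simply states ``In the same way as Lemma~\ref{LL2}'' without further elaboration, and your argument is precisely the fleshed-out version of that remark, replacing scalar symbols by length-$n$ row vectors and rerunning the induction. Your alternative via Lemma~\ref{LL2-2} is also valid, though the paper does not invoke it here.
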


In addition, the wiretap and replacement model in this setting 
can be introduced 
for the $n$-sequential transmission and the $n$-simultaneous transmission
in the same way as Section \ref{MML}.




\subsection{Non-local code and reduction theorem}
Now, we assume only the model \eqref{F1n} and \eqref{F2n} and the $n$-uniqueness condition.
Since the model \eqref{F1n} and \eqref{F2n} is given, 
we manage only the encoder in the sender and the decoder in the receiver.
Although the operations in the intermediate nodes are linear and operate only on a single transmission,
the encoder and the decoder operate across several transmissions.
Such a code is called a non-local code to distinguish operations over a single transmission.
Here, we formulate a non-local code to discuss the secrecy.
Let ${\cal M}$ and ${\cal L}$ be the message set and the set of values of the scramble random number,
which is often called the private randomness. 
Then, an encoder is given as a function $\phi_n$ from ${\cal M} \times {\cal L} $
to $\FF_q^{m_{{3}} \times n}$, and the decoder is given as $\psi_n$ from $\FF_q^{m_{{4}} \times n}$ to ${\cal M}$.
That is, the decoder does not use the scramble random number $L$ because 
it is not shared with the decoder. 
Our non-local code is the pair $(\phi_n,\psi_n)$, and is denoted by $\Phi_n$.
Then, we denote the message and  the scramble random number as $M$ and $L$. 
The cardinality of ${\cal M}$ is called the size of the code and is denoted by $|\Phi_n|$.
More generally, when we focus on a sequence $\{l_n\}$ instead of $\{n\}$,
an encoder $\phi_n$ is a function from ${\cal M} \times {\cal L} $
to $\FF_q^{m_{{3}} \times l_n}$, and the decoder $\psi_n$ is a function from $\FF_q^{m_{{4}} \times l_n}$ to ${\cal M}$.

Here, we treat $K_B,K_E,H_B$, and $H_E$ as deterministic values, and denote the pairs $(K_B,K_E)$ and $(H_B,H_E)$ by $\bm{K}$ and $\bm{H}$, respectively
while Alice and Bob might not have the full information for 
$K_E,H_B,$ and $H_E$.
Also, we assume that the matrices $\bm{K}$ and $\bm{H}$ are not changed during transmission.
In the following, we fix $\Phi_n,\bm{K},\bm{H},\alpha^n$.
As a measure of the leaked information, we adopt
the mutual information $I(M; Y_E^n,Z^n)$ between $M$ and Eve's information $Y_E^n$ and $Z^n$.
Since the variable $Z^n$ is given as a function of $Y_E^n$,
we have $I(M; Y_E^n,Z^n)=I(M; Y_E^n)$.
Since the leaked information is given as a function of 
$\Phi_n,\bm{K},\bm{H},\alpha^n$ in this situation, 
we denote it by $I(M;Y_E^n)[\Phi_n,\bm{K},\bm{H},\alpha^n]$.

\begin{definition}\Label{D-zero}
When we always choose $Z^n=0$, the attack is the same as the passive attack.
This strategy is denoted by $\alpha^n=0$.
\end{definition}

When $\bm{K},\bm{H}$ are treated as random variables independent of $M,L$,
the leaked information is given as the expectation of $I(M;Y_E^n)[\Phi_n,\bm{K},\bm{H},\alpha^n]$.
This probabilistic setting expresses the following situation. 
Eve cannot necessarily choose edges to be attacked by herself.
But she knows the positions of the attacked edges,
and chooses her strategy depending on the attacked edges.

\begin{remark}
It is better to remark that there are two kinds of formulations in network coding
even when the network has only one sender and one receiver.
Many papers \cite{Cai2002,Cai06a,Cai06,CN11,CN11b} adopt the formulation, where
the users can control the coding operation in intermediate nodes.
However, this paper adopts another formulation, in which,
the non-local coding operations are done only for the input variable $X$ and the output variable $Y_B$ like the papers \cite{KMU,Zhang,Yao2014,JLKHKM,Jaggi2008,JL}.
In contrast, all intermediate nodes make only linear operations over a single transmission, which is often called local encoding in \cite{JLKHKM,Jaggi2008,JL}.
Since the linear operations in intermediate nodes cannot be controlled by 
the sender and the receiver,
this formulation contains the case when a part of intermediate nodes do not work and output $0$ always.

In the former setting, it is often allowed to employ the private randomness in intermediate nodes.
However, we adopt the latter setting, i.e.,
no non-local coding operation is allowed in intermediate nodes, and
each intermediate node is required to make the same linear operation on each alphabet.
That is, the operations in intermediate nodes are linear and are not changed during $n$ transmissions.
The private randomness is not employed in intermediate nodes.
\end{remark}

Now, we have the following reduction theorem.
\begin{theorem}[Reduction Theorem]\Label{T1}
When the triplet $(\bm{K},\bm{H},\alpha^n)$
satisfies the uniqueness condition, 
Eve's information $Y_E^n(\alpha^n)$ with strategy $\alpha^n$ can be calculated from 
Eve's information $Y_E^n(0)$ with strategy $0$ (the passive attack),
and $Y_E^n(0)$ is also calculated from $Y_E^n(\alpha^n)$.
Hence, we have the equation
\begin{align}
I(M;Y_E^n)[\Phi_n,\bm{K},0,0]
=&
I(M;Y_E^n)[\Phi_n,\bm{K},\bm{H},0] \nonumber \\
=&
I(M;Y_E^n)[\Phi_n,\bm{K},\bm{H},\alpha^n]. \Label{NCT}
\end{align}
\end{theorem}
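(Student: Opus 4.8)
The plan is to reduce Theorem \ref{T1} to the single-transmission Reduction Theorem \ref{T0}. The key observation is that the $n$-uniqueness condition of Definition \ref{D-U-n} is literally the uniqueness condition \eqref{Uni} applied to the larger "stacked" system in which $X^n$, $Y_E^n$, $Z^n$ are treated as vectors in $\FF_q^{m_3 n}$, $\FF_q^{m_6 n}$, $\FF_q^{m_5 n}$ respectively, with the matrices $K_E$, $H_E$ acting block-diagonally across the $n$ transmissions and $\alpha^n$ playing the role of $\alpha$. Concretely, first I would vectorize: write $\mathrm{vec}(X^n)$ for the column vector obtained by stacking the columns $\mathbf{X}_1,\ldots,\mathbf{X}_n$, and note that \eqref{F1n}, \eqref{F2n} become $\mathrm{vec}(Y_B^n)=(I_n\otimes K_B)\mathrm{vec}(X^n)+(I_n\otimes H_B)\mathrm{vec}(Z^n)$ and similarly for $Y_E^n$. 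Under this identification $\alpha^n$ is a function from $\FF_q^{m_6 n}$ to $\FF_q^{m_5 n}$, and \eqref{Uni2} is exactly \eqref{Uni} for this system.

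Next I would invoke Theorem \ref{T0} verbatim for the stacked system: since the $n$-uniqueness condition holds, $Y_E^n(\alpha^n)=(I_n\otimes K_E)\mathrm{vec}(X^n)+(I_n\otimes H_E)\mathrm{vec}(Z^n)$ is a deterministic function of $Y_E^n(0)=(I_n\otimes K_E)\mathrm{vec}(X^n)$ (solve \eqref{Uni2} for $y^n$ given $K_E x^n$, then set $z^n=\alpha^n(y^n)$), and conversely $Y_E^n(0)=Y_E^n(\alpha^n)-(I_n\otimes H_E)\mathrm{vec}(Z^n)$ is a function of $(Y_E^n(\alpha^n),Z^n)$, hence of $Y_E^n(\alpha^n)$ alone because $Z^n=\alpha^n(Y_E^n(\alpha^n))$. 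This gives a bijective correspondence between $Y_E^n(0)$ and $Y_E^n(\alpha^n)$ for each fixed value of $X^n$ (equivalently of $K_E X^n$), which is precisely the content of the first sentence of the theorem.

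Finally I would translate this into the mutual-information chain \eqref{NCT}. Recall from the discussion preceding Definition \ref{D-zero} that $I(M;Y_E^n,Z^n)=I(M;Y_E^n)$ since $Z^n$ is a function of $Y_E^n$; and $I(M;Y_E^n)[\Phi_n,\bm K,0,0]=I(M;Y_E^n)[\Phi_n,\bm K,\bm H,0]$ trivially because with strategy $0$ we have $Z^n=0$ so the $\bm H$ blocks are irrelevant. For the remaining equality, the encoder $\phi_n$ fixes the joint law of $(M,X^n)$ (through $M,L$), and the data-processing inequality applied to the two directions of the deterministic map between $Y_E^n(0)$ and $Y_E^n(\alpha^n)$ — both of which are, conditionally on $X^n$, functions of each other and carry no extra randomness — yields $I(M;Y_E^n)[\Phi_n,\bm K,\bm H,0]=I(M;Y_E^n)[\Phi_n,\bm K,\bm H,\alpha^n]$.

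I do not expect a serious obstacle here; the theorem is essentially a corollary of Theorem \ref{T0} together with the already-established fact that the $n$-uniqueness condition is the uniqueness condition for the stacked model. The only point requiring a little care is making the "function of each other" argument rigorous at the level of random variables rather than fixed values: one must check that the map $Y_E^n(0)\mapsto Y_E^n(\alpha^n)$ built from solving \eqref{Uni2} is well-defined as a single fixed function (independent of $X^n$), which follows because $\alpha^n$, $K_E$, $H_E$ are fixed and \eqref{Uni2} has a unique solution for every right-hand side of the form $K_E x^n$; and symmetrically for the reverse map using $Z^n=\alpha^n(Y_E^n(\alpha^n))$. Once that is noted, the mutual-information equalities are immediate from data processing in both directions. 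I would also remark, as the surrounding text does, that when $\bm K,\bm H$ are randomized independently of $(M,L)$ the same identities hold conditionally on $(\bm K,\bm H)$ and then in expectation.
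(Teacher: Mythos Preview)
Your proposal is correct and mirrors the paper's own proof almost exactly: the paper simply repeats, for the $n$-transmission variables $Y_E^n(0)=K_E X^n$ and $Y_E^n(\alpha^n)=K_E X^n+H_E Z^n$, the two-direction data-processing argument of Theorem \ref{T0}, using the $n$-uniqueness condition in one direction and $Y_E^n(0)=Y_E^n(\alpha^n)-H_E Z^n$ with $Z^n=\alpha^n(Y_E^n(\alpha^n))$ in the other. Your vectorization/stacking is just a notational device for invoking Theorem \ref{T0} rather than rewriting its three-line proof; the substance is identical.
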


\begin{proof}
Since the first equation follows from the definition, 
we show the second equation.
We define two random variables $Y_E^n(0):=K_E X^n$ and 
$Y_E^n(\alpha^n):=K_E X^n+ H_E Z^n$.
Due to the uniqueness condition of $Y_E^n(\alpha^n)$, 
for each $Y_E^n(0)=K_E X^n$,
we can uniquely identify 
$Y_E^n(\alpha^n)$.
Therefore, we have 
$I(M;Y_E^n(0) )  \ge I(M;Y_E^n(\alpha^n))$.
Conversely, since $Y_E^n(0)$ is given as a function of $Y_E^n(\alpha^n)$, $Z^n$, and $H_E$,
we have the opposite inequality.
\end{proof}

\if0
\begin{remark}[Reduction to passive attack]
This theorem shows that the information leakage of the active attack with the strategy $\alpha^n$ 
is the same as 
the information leakage of the passive attack.
Hence, to guarantee the secrecy under an arbitrary active attack,
it is sufficient to show secrecy under the passive attack.
\end{remark}
\fi

\begin{remark}\Label{R5}
Theorem \ref{T1} discusses the unicast case.
It can be trivially extended to the multicast case
because we do not discuss the decoder.
It can also be extended to the multiple unicast case, 
whose network is composed of several pairs of sender and receiver.
When there are $k$ pairs in this setting, the messages $M$ and the scramble random numbers $L$ have the forms
$(M_1, \ldots, M_k)$ and $(L_1, \ldots, L_k)$.
Thus, we can apply Theorem \ref{T1} to the multiple unicast case.
The detail discussion for this extension is discussed in the paper \cite{HOKC3}.
\end{remark}

\begin{remark}
One may consider the following type of attack when Alice sends the $i$-th transmission after 
Bob receives the $i-1$-th transmission.
Eve changes the edge to be attacked in the $i$-th transmission
dependently of the information that Eve obtains in the previous $i-1$ transmissions.
Such an attack was discussed in \cite{SHIOJI} when there is no noise injection.
Theorem \ref{T1} does not consider such a situation because it assumes that Eve attacks the same edges
for each transmission. 
However, Theorem \ref{T1} can be applied to this kind of attack in the following way.
That is, we find that Eve's information with noise injection can be simulated by 
Eve's information without noise injection even when the attacked edges are changed in the above way.

To see this reduction, we consider $m$ transmissions over 
the network given by the direct graph $(V,E)$. 
We define the big graph $(V_m,E_m)$, where
$V_m:=\{(v,i)\}_{v\in V, 1\le i \le m}$ and 
$E_m:=\{(e,i)\}_{e\in E, 1\le i \le m}$
and $(v,i)$ and $(e,i)$ express the vertex $v$ and the edge $e$ on the $i$-th transmission, respectively.
Then, we can apply Theorem \ref{T1} with $n=1$ to the network given by 
the directed graph $(V_m,E_m)$
when the attacked edges are changed in the above way.
Hence, we obtain the above reduction statement
under the uniqueness condition for the network decided by 
the directed graph $(V_m,E_m)$.
\end{remark}

\begin{table*}[h!]
  \caption{Summary of security analysis}
\Label{Tb}
\begin{center}
  \begin{tabular}{|c|l|l|l|l|l|} 
\hline
Node & Eavesdropping & Vector & $\eta$ &Detection & Recovery \\
\hline
\multirow{3}{*}{$v(1)$} & $e(1):L_1$ & $(1,0,0)$&$\eta(1)=1$ &\multirow{3}{*}{$-Z_1\kappa$} &
\multirow{3}{*}{$Y_{B,2}-Y_{B,3}$}
\\
& $e(5):L_1$&$(1,0,0)$ &$\eta(2)=5$&&\\
& $e(10):L_1$ &$(1,0,0)$ &$\eta(3)=10$& &\\
\hline
\multirow{5}{*}{$v(2)$} & $e(2):L_1 \kappa+L_2(1+\kappa)+M\kappa$ & $(\kappa,1+\kappa,\kappa)$
& &\multirow{5}{*}{$Z_2-Z_1\kappa$} &
\multirow{5}{*}{$Y_{B,2}-Y_{B,3}$}
\\
& $e(5):L_1$&$(1,0,0)$ &$\eta(1)=5$&&\\
& $e(6):L_1$ &$(1,0,0)$ && &\\
& $e(7):L_1(1+\kappa)+L_2(1+\kappa)+M\kappa$ &$(1+\kappa,1+\kappa,\kappa)$ &$\eta(2)=2$&& \\
& $e(8):L_1$ &$(1,0,0)$ & &&\\
\hline
\multirow{3}{*}{$v(3)$} & $e(3):L_2+M$ &$ (0,1,1)$&{$\eta(1)=3$}& \multirow{3}{*}{$-(Z_1+Z_2+Z_3)\kappa$}& 
\multirow{3}{*}{$ (Y_{B,1}- Y_{B,3}(1+\kappa) )\kappa^{-1}$}\\
& $e(6):L_1$ &$(1,0,0)$ &$\eta(2)=6$& &\\
& $e(9):L_1+L_2+M$&$(1,1,1)$&$\eta(3)=9$&& \\
\hline
\multirow{4}{*}{$v(4)$} & 
$e(4):L_2$ & $(0,1,0)$&$\eta(1)=4$& 
\multirow{4}{*}{$-Z_1-Z_2-Z_4$} &
\multirow{4}{*}{$Y_{B,2}(1+\kappa)-Y_{B,1}$}\\
& $e(8):L_1$ &$(1,0,0)$ &$\eta(2)=8$ &&\\
& $e(10):L_1$ &$(1,0,0)$ & $\eta(3)=10$&&\\
& $e(11):L_1+L_2$&$(1,1,0)$& $\eta(4)=11$&&\\
\hline
  \end{tabular}
\end{center}
Detection expresses $Y_{B,1}-(Y_{B,3}+Y_{B,2}\kappa)$. 
If this value is not zero, Bob considers that there exists the contamination.
Recovery expresses Bob's method that decodes the message $M$ dependently of $v(i)$. 
\end{table*}

\subsection{Application to network model in Subsection \ref{F1Ex}}\Label{S81}
We consider how to apply the multiple transmission setting with sequential transmission with $n=2$
to the network given in Subsection \ref{F1Ex}, i.e., 
we discuss the network given in Figs. \ref{F3} and \ref{F1B} over the field $\FF_q$ with $n=2$.
Then, we analyze the secrecy by applying Theorem \ref{T1}.

Assume that Eve eavesdrops edges $e(2),e(5),e(6),e(7),e(8)$ 
and contaminates edges $e(2),e(5)$ as Fig. \ref{F3}.
Then, we set the function $\tau_E$ from $\{1,\ldots, 11\}\times \{1,2\}$
to $\{1,\ldots, 22\}$ as
\begin{align}
\tau_E(i,l)= i+11(l-1)\Label{tau1}.
\end{align}
Under the choice of $\eta$ given in \eqref{eta-c},
the function $\tau_{E,\eta}$ can be set in another way as
\begin{align}
\tau_{E,\eta}(i,l)= i+2(l-1)\Label{tauE}.
\end{align}
Since 
$\gamma_{\bar{\theta}}(2,1)=7$,
$\gamma_{\bar{\theta}}(5,1)=6$,
$\gamma_{\bar{\theta}}(2,2)=18$,
$\gamma_{\bar{\theta}}(5,2)=17$,
we have
\begin{align*}
w_{\eta,1,1} =&\{(1,1),(2,1)\}, ~
w_{\eta,2,1}=\{(1,1),(2,1),(3,1)\} \\
w_{\eta,1,2} =&\{(1,1),(2,1),(3,1),(4,1),(5,1),(1,2),(2,2)\} \\
w_{\eta,2,2}=&\{(1,1),(2,1),(3,1), \\
&\hspace{13ex} (4,1),(5,1),(1,2),(2,2),(3,2)\} .
\end{align*}
However, 
when the function $\tau_E$ is changed as
\begin{align}
\tau_E(i,l)&= i+5(l-1) \hbox{ for } i=1, \ldots, 5 \\
\tau_E(i,l)&= 5+i +6(l-1) \hbox{ for } i=6, \ldots, 11 ,
\Label{tau3}
\end{align}
$w_{\eta,i,l}$ has a different form as follows.
Under the choice of $\eta$ given in \eqref{eta-c},
while Eve can choose $\tau_{E,\eta}$ in the same way as \eqref{tauE},
since 
$\gamma_{\bar{\theta}}(2,1)=12$,
$\gamma_{\bar{\theta}}(5,1)=11$,
$\gamma_{\bar{\theta}}(2,2)=18$,
$\gamma_{\bar{\theta}}(5,2)=17$,
we have
\begin{align*}
w_{\eta,1,1} =&\{(1,1),(2,1),(1,2),(2,2)\} \\
w_{\eta,2,1}=&\{(1,1),(2,1),(3,1),(1,2),(2,2)\} \\
w_{\eta,1,2} =&\{(1,1),(2,1),(3,1),(4,1),(5,1),(1,2),(2,2)\} \\
w_{\eta,2,2}=&\{(1,1),(2,1),(3,1), \\
&\hspace{13ex} (4,1),(5,1),(1,2),(2,2),(3,2)\} .
\end{align*}

We construct a code, in which, 
the secrecy holds and Bob can detect the existence of the error
in this case.
For this aim, we consider two cases;
(i) There exists an element $\kappa \in \FF_q$ to satisfy
the equation $\kappa^2=\kappa+1$.
(ii) No element $\kappa \in \FF_q$ satisfies
the equation $\kappa^2=\kappa+1$.
Our code works even with $n=1$ in the case (i).
But, it requires $n=2$ in the case (ii).
For simplicity, we give our code with $n=2$ even in the case (i).

Assume the case (i).
Alice's message is $M=(M_1,M_2) \in \FF_q^2$, and
Alice prepares scramble random numbers $L_i=(L_{i,1},L_{i,2})\in \FF_q^2$ with 
$i=1,2$.
These variables are assumed to be subject to the uniform distribution independently.
She encodes them as 
$X_1= L_1$, 
$X_2= L_1 \kappa+L_2(1+\kappa)+M \kappa$,
$X_3= L_2+M$,
 and 
$X_4= L_2$.
When $Z_1=Z_2=0$, Bob receives 
\begin{align}
\begin{split}
Y_{B,1}&=X_1+X_2= L_1 (1+\kappa)+L_2(1+\kappa) +M \kappa,\\
Y_{B,2}&=X_1+X_3= L_1+L_2 +M,\\ 
Y_{B,3}&=X_1+X_4= L_1+L_2 . 
\end{split}
\Label{KLa}
\end{align}
Then, since $ M=Y_{B,2}-Y_{B,3}$,
he recovers the message by using $Y_{B,2}-Y_{B,3}$.

As shown in the following, under this code, Eve cannot obtain any information for $M$
even though she makes active attack.
Due to Theorem \ref{T1},
it is sufficient to show the secrecy when $Z_i=0$.
Eve's information is $Y_{E,1}=L_1 \kappa+L_2(1+\kappa)+M \kappa,Y_{E,2}=L_1,Y_{E,3}=L_1$, 
$Y_{E,4}= L_1 (1+\kappa)+L_2(1+\kappa)+M \kappa$, and $Y_{E,5}=L_1$.
That is, 
when variables $L_1,L_2,M$ are described by the vectors 
$(1,0,0)$, $(0,1,0)$, $(0,0,1)$, respectively,
her eavesdropping information is characterized by the vectors
$(\kappa,1+\kappa,\kappa)$, 
$(1,0,0)$, $(1,0,0)$, $(1+\kappa,1+\kappa,\kappa)$, and $(1,0,0)$ ,
and the message is by the vector $(0,0,1)$.
Since these vectors are linearly independent,
the message is independent of her eavesdropping information. 

Indeed, 
the above attack can be considered as the following.
Eve can eavesdrop all edges connected to the intermediate node $v(2)$
and contaminate all edges incoming to the intermediate node $v(2)$.
The above setting means that the intermediate node $v(2)$ is partially captured by Eve.
As other settings, we consider the case when Eve attacks another node $v(i)$ for $i=1,3,4$.
In this case, we allow a slightly stronger attack, i.e., 
Eve can eavesdrop and contaminate all edges connected to the intermediate node $v(i)$. 
That is, Eve's attack is summarized as
\begin{description}
\item[(B1')]
Eve can choose any one of nodes $v(1),\ldots, v(4)$. 
When $v(2)$ is chosen, 
she eavesdrops all edges connected to $v(2)$
and contaminates all edges incoming to $v(2)$.
When $v(i)$ is chosen for $i=1,3,4$, 
she eavesdrops and contaminates all edges connected to $v(i)$.
\end{description}
Under this attack, this code has the same secrecy
as summarized in Table \ref{Tb}.

In the case (ii),
we set $\kappa$ as the matrix
$\left(
\begin{array}{cc}
0 & 1 \\
1 & 1
\end{array}
\right)$.
Then, we introduce the algebraic extension $ \FF_q[\kappa] $ of the field $\FF_q$ by using the element $e$ to satisfy the equation 
$\kappa^2=\kappa+1$.
Then, we identify an element $(x_1,x_2)\in \FF_q^2$
with $ x_1+x_2 \kappa\in \FF_q[\kappa] $.
Hence, the multiplication of the matrix $\kappa$ in $\FF_q^2$ can be identified with 
the multiplication of $\kappa$ in $\FF_q[\kappa]$.
The above analysis works by identifying $\FF_q^2 $ with the algebraic extension $\FF_q[\kappa]$
in the case (ii).

\subsection{Error detection}\Label{S81-5}
Next, we consider another type of security,
i.e., the detectability of the existence of the error 
when $n=2$
with the assumptions (B1'), (B2) and 
the following alternative assumption;
\begin{description}
\item[(B3')]
The message set ${\cal M}$ is $\FF_q^2$, and 
all information on all edges per single use are $\FF_q$.
\item[(B4')]
The encoder on the source node can be chosen, but is restricted to linear.
It is allowed to use a scramble random number, which is an element of
${\cal L}:=\FF_q^k$ with a certain integer $k$. 
Formally,
the encoder is given as
as a linear function from ${\cal M} \times {\cal L}$
to $\FF_q^8$.
\if0
\item[(B6')]
If Bob knows which node $v(i)$ is attacked by Eve,
Bob can recover the original message.
That is, there exists a decoder that recover the message $M$ dependently of $v(i)$ while the encoder is independent of $v(i)$.
\fi
\end{description}
We employ the code given in Subsection \ref{S81}
and consider that the contamination exists when 
$Y_{B,1}-(Y_{B,3}+Y_{B,2}\kappa)$ is not zero.
This code satisfies the secrecy and the detectability as follows.

To consider the case with $v(2)$, we set $\eta(1)=5,\eta(2)=2$.
Regardless of whether Eve makes contamination, 
$Y_{B,2}-Y_{B,3}= L_1+L_2+ Z_1+M -(L_1+L_2+ Z_1)=M$.
In the following, $Y_{B,i}$ for $i=1,2,3$ expresses the variable 
when Eve makes contamination.
Hence, Bob always recovers the original message $M$. 
Therefore, this code satisfies the desired security in the case with Fig. \ref{F3}.

In the case of $v(3)$, we set $\eta(1)=3,\eta(2)=6,\eta(3)=9$.
Then, $Y_{B,1}-(Y_{B,3}+Y_{B,2}\kappa)$ is calculated to $-(Z_1+Z_2+Z_3)\kappa$.
Hence, when $Z_1+Z_2+Z_3=0$, Bob detect no error.
In this case, the contamination $Z_1$, $Z_2$, and $Z_3$ do not change $Y_{B,2}-Y_{B,3}$, i.e., do not cause any error for the decoded message.
Hence, in order to detect an error in the decoded message,
it is sufficient to check whether 
$Y_{B,1}-(Y_{B,3}+Y_{B,2}\kappa)$ is zero or not.
Since $Y_{B,2}= X_1+X_3+Z_1+Z_2+Z_3$, we have 
$M \kappa=
L_1 (1+\kappa)+L_2(1+\kappa)+M \kappa
-(L_1+L_2) (1+\kappa)
=Y_{B,1}-Y_{B,3}(1+\kappa)$.
Hence, if Bob knows that only the edges $e(3)$, $e(6)$, and $e(9)$ are contaminated,
he can recover the message by 
$(Y_{B,1}-Y_{B,3}(1+\kappa))\kappa^{-1}$.

In the case of $v(4)$, we set $\eta(1)=4,\eta(2)=8,\eta(3)=10,\eta(4)=11$.
When $
Y_{B,1}-(Y_{B,3}+Y_{B,2}\kappa)
=-(Z_1+Z_2+Z_4)=0$, Bob detects no error.
In this case, the errors $Z_1$, $Z_2$, and $Z_4$ 
do not change $Y_{B,2}-Y_{B,3}$.
Hence, it is sufficient to check whether 
$Y_{B,1}-(Y_{B,3}+Y_{B,2}\kappa)$ is zero or not.
In addition, if Bob knows that only the edges $e(4),e(8),e(10),e(11)$ 
are contaminated,
he can recover the message by 
$Y_{B,2}(1+\kappa)-Y_{B,1}$.

Similarly, in the case of $v(1)$, we set $\eta(1)=1$, $\eta(2)=5$, $\eta(3)=10$.
If Bob knows that only the edges $e(1),e(5),e(10)$ are contaminated,
he can recover the message by the original method $Y_{B,2}-Y_{B,3}$
because it equals $L_1+L_2 +M+Z_1- ( L_1+L_2 +Z_1)$.
In summary, when this type attack is done, 
Bob can detect the existence of the error.
If he identifies the attacked node $v(i)$ by another method,
he can recover the message.


\subsection{Solution of problem given in Subsection \ref{Se2F}}\Label{S82}
Next, we consider how to resolve the problem arisen in Subsection \ref{Se2F}.
That is, we discuss another type of attack given as (B1),
and study the secrecy and the detectability of 
the existence of the error under the above-explained code 
with the assumptions (B2), (B3'), (B4'), and (B5).

To discuss this problem, we divide this network into two layers.
The lower layer consists of the edges $e(7), e(9), e(11)$, which connected to the sink node.
The upper layer does of the remaining edges. 
Eve eavesdrops and contaminates any one edge among the upper layer,
and eavesdrops any one edge among the lower layer.

The vectors corresponding to the edges of the upper layer are
$(1,0,0)$,
$(\kappa,1+\kappa,\kappa)$,
$(0,1,1)$,
$(0,1,0)$.
The vectors corresponding to the edges of the lower layer are
$(1+\kappa,1+\kappa,\kappa)$,
$(1,1,1)$,
$(1,1,0)$.
Any linear combination from the upper and lower layers
is not $(0,0,1)$.
Hence, the secrecy holds under the lower type attack.
Since the contamination of this type attack
is contained in the contamination of the attack discussed in the previous subsection.
the detectability also holds.

\section{Conclusion}\Label{SCon}
We have discussed how sequential error injection affects the information leaked to Eve
when node operations are linear.
To discuss this problem, 
we have considered the possibility that the network does not have synchronization
so that the information transmission on an edges starts before 
the end of the the information transmission on the previous edge.
Hence, Eve might contaminate the information on several edges 
by using the original information of these edges.
Also, we have discussed the multiple uses of the same network
when the topology and the dynamics of the network does not changes
and there is no correlation with the delayed information.  

As a result, we have shown that 
there is no improvement by injecting an artificial noise on attacked edges.
This result can be regarded as a kind of reduction theorem
because the secrecy analysis with contamination can be reduced to that without contamination.
Indeed, when the linearity is not imposed,
there is a counterexample of this reduction theorem \cite{HOKC2}.

In addition,
we have derived the matrix formulas \eqref{F1n} and \eqref{F2n}
for the relation between the outputs of Alice and Bob and the inputs of Alice and Eve 
in the case with the multiple transmission.
As the extension of Theorem \ref{T0}, 
the similar reduction theorem (Theorem \ref{T1}) holds even for 
the multiple transmission.
In fact, as explained in Subsection \ref{S82}, this extension is essential 
because there exists an attack model over a network model 
such that the secrecy and the detectability of the error are possible 
with multiple uses of the same network 
while it is impossible with the single use of the network. 
Also, another paper will discuss the application of these results to the asymptotic setting \cite{HOKC3}.

\section*{Acknowledgments}
MH and NC are very grateful to Dr. Wangmei Guo
and Mr. Seunghoan Song
for helpful discussions and comments.
\if0
The studies reported here were supported in part by 
the JSPS Grant-in-Aid for Scientific Research 
(C) No. 16K00014, (B) No. 16KT0017, (A) No.17H01280, 
(C) No. 17K05591, 
the Okawa Research Grant,
and Kayamori Foundation of Informational Science Advancement.
\fi

\end{document}